\documentclass{rsauthor}	

\usepackage{eucal}



\usepackage[sectionbib]{natbib}


\bibpunct[, ]{(}{)}{;}{a}{}{,}

\let\mycite\citep
\let\myciteasnoun\citet


\hyphenation{Abram-owitz}


\renewcommand\ge\geqslant
\renewcommand\geq\geqslant
\renewcommand\le\leqslant
\renewcommand\leq\leqslant
\renewcommand\propto\varpropto



\newcommand{\defeq}{:=}

\newcommand{\bbeta}{{\boldsymbol\beta}}
\newcommand{\bepsilon}{{\boldsymbol\epsilon}}
\newcommand{\btheta}{{\boldsymbol\theta}}
\newcommand{\bTheta}{{\boldsymbol\Theta}}

\DeclareMathOperator{\tr}{tr}
\DeclareMathOperator{\Var}{Var}

\makeatletter
\newenvironment{sizealignat}[2]{%
  \skip@=\baselineskip
  #1%
  \baselineskip=\skip@
  \alignat
  #2%
}{\endalignat \ignorespacesafterend}  
\makeatother


\newtheorem{mytheorem}{Theorem}
\newtheorem{mycorollary}[mytheorem]{Corollary}

\theoremstyle{definition} 
\newtheorem*{selectionpolicy*}{Selection Policy}
\newtheorem{mydefinition}[mytheorem]{Definition}

\newtheorem*{hypothesistest*}{Hypothesis Test}

\theoremstyle{remark}
\newtheorem*{remark*}{Remark}

\numberwithin{mytheorem}{section}


\jname{Proc.\ R. Soc.\ A}


\makeatletter
\def\titlepagefooter{\parbox{\textwidth}{%
\scriptsize \textit{\@jname}\ \@jpages;\ \@jdoi\@artnum\\%
\ifx\empty\@date\else\@date\vskip6pt\fi%
This journal is \copyright\ 2013 The Royal Society}}
\makeatother


\begin{document}

\title[Information criteria and normal regression]{Information criteria for deciding between\\ normal regression models}
\author[R. S. Maier]{\sc By Robert S. Maier${}^*$\footnote{${}^*$rsm@math.arizona.edu}}
\address{Departments of Mathematics and Physics, and Statistics Program\\ University of Arizona, Tucson, AZ 85721, USA}

\abstract{Regression models fitted to data can be assessed on their
goodness of fit, though models with many parameters should be
disfavored to prevent over-fitting.  Statisticians' tools for
this are little known to physical scientists.  These include the
Akaike Information Criterion (AIC), a penalized goodness-of-fit
statistic, and the AICc, a variant including a small-sample
correction.  They entered the physical sciences through being
used by astrophysicists to compare cosmological models; e.g.,
predictions of the distance--redshift relation.  The AICc is shown
to have been mis-applied, being applicable only if error
variances are unknown.  If error bars accompany the data, the AIC
should be used instead.  Erroneous applications of the AICc are
listed in an appendix.  It is also shown how the variability of
the AIC difference between models with a known error variance can
be estimated.  This yields a significance test that can
potentially replace the use of `Akaike weights' for deciding
between such models.  Additionally, the effects of model
mis-specification are examined.  For regression models fitted to
data sets without (rather than with) error bars, they are major:
the AICc may be shifted by an unknown amount.  The extent of this
in the fitting of physical models remains to be studied.}


\makeatletter
\markboth{\@shortauthor\hfil}{\hfil\@shorttitle}
\makeatother

\maketitle

\section{Introduction}
\label{sec:1}

\subsection{Background and overview}
\label{subsec:1a}

Physical scientists are familiar with the task of fitting a parametric
model such as a regression model to a data set, by using maximum likelihood
estimation (MLE) or other parameter estimation techniques.  But they are
less familiar with \emph{model selection}: deciding among two or more
models fitted to the same data in a way that for each model takes account
of both its goodness of~fit and its number of parameters.  To what extent
should one attempt to prevent over-fitting, i.e., `fitting the noise,' by
penalizing models with too many parameters?  This question of parsimony can
be viewed not only as a problem in data analysis, but as one in the
philosophy of science~\mycite{Zellner2001}.  The case when the models being
compared are incompatible, i.e., are non-nested in that they are not
related by parametric restrictions, is especially vexing.  So is the case
when they are mis-specified, i.e., do not agree with the `truth' (the
unknown and perhaps infinite-dimensional data-generating process),
regardless of what values for their parameters are chosen; so that fitting
errors of non-zero mean are present.

Techniques for model selection that penalize over-fitting have been applied
in the life sciences, social sciences and econometrics, and several
book-length expositions of these techniques by statisticians are
available~\mycite{Sakamoto86,McQuarrie98,Claeskens2008,Konishi2008}.
A~fruitful concept is the AIC (Akaike Information Criterion), a certain
penalized likelihood or goodness of~fit statistic (\citealp{Akaike73},
reprinted as \citealp{Akaike92}).  It is an estimate of the discrepancy, in
a sense related to MLE and information theory, between a fitted statistical
model and the unknown data-generating process; the latter being statistical
also, if measurement uncertainties are incorporated.  In simple cases the
AIC is effectively a penalized sum of squared prediction errors.  By
comparing AIC's one can compare models with different numbers of
parameters, including incompatible models.  But in the absence of a
systematic theory of the variability of the AIC statistic, using AIC's to
decide between fitted versions of parametric models
$\mathcal{M}^1,\mathcal{M}^2$ cannot be viewed as a procedure in classical
statistical inference, i.e., as a significance test.
No~$p$\nobreakdash-value, as in a frequentist assessment of the evidence
against a null hypothesis, is actually calculated.  Instead one simply
says, e.g., that if $\Delta^{12}\defeq\allowbreak{\rm AIC}^2-\nobreak {\rm
  AIC}^1$ is less than~$2.0$, the evidence that $\mathcal{M}^1$~is to be
preferred over~$\mathcal{M}^2$ is weak; and that if $\Delta^{12}$~is
greater than~$5.0$, it is strong.  The `Akaike weight' $\exp(-{\rm
  AIC}^l/2)$ is often viewed as an unnormalized probability (in~some sense)
that $\mathcal{M}^l$~is to be preferred~\mycite{Burnham2002}, but this
interpretation has not been universally accepted.

A few years ago, the AIC and related criteria (such as AICc, a variant
including a small correction) entered the physical sciences by being
introduced into astrophysics \mycite{Takeuchi2000,Liddle2007}.  They have
been used to compare cosmological models, such as regression models of the
distance--redshift relation that characterizes the expansion of the
Universe.  Unfortunately, in many papers a mistake in data analysis has
been made.  It can perhaps be attributed to a misreading of the expositions
of \myciteasnoun{Burnham2002} and~\myciteasnoun{Liddle2007}.  The mistake
is this.  A~data set may be accompanied by error bars (i.e., measurement
uncertainties), or not.  If the latter, regression model fitting will
involve the estimation of a `nuisance parameter,' namely the unknown
variance~$\sigma^2$ of the measurement errors.  The AICc, which was
designed to unbias completely the estimate of the Kullback--Leibler
information-theoretic discrepancy provided by the AIC, is appropriate
\emph{only} if $\sigma^2$~is unknown.  But data in the physical sciences
are typically accompanied by error bars.  When assessing statistical models
that incorporate known error bars, the AIC and not the AICc should be used.

To show this, we first place the AIC in a general framework that can be
used to derive many information-theoretic model-selection statistics.
(See~\S\,\ref{sec:2}.)  In~\S\,\ref{sec:3}\,$\ref{subsec:3a}$ we restrict
our focus to linear regression and MLE, and show that the applicability of
the AICc is limited as claimed.  In Appendix~B, papers from the
astrophysics literature that have erroneously applied the AICc are listed.

In~\S\,\ref{sec:3}\,$\ref{subsec:3b}$ we obtain a further result: under
reasonable conditions of mis-specification, using the correct statistic
(the AIC) to decide between normal regression models
$\mathcal{M}^1,\mathcal{M}^2$ that incorporate error bars can indeed be
viewed as a test of significance.  That~is, the decision can be made in a
classical way.  One can calculate a $p$\nobreakdash-value associated to the
null hypothesis that $\mathcal{M}^1,\mathcal{M}^2$ are equidistant in an
information-theoretic sense from the true but unknown
model~$\mathcal{M}^*$, as opposed to the alternative hypothesis that they
are not.  This is because the variability of $\Delta^{12}=\allowbreak{\rm
  AIC}^2-\nobreak {\rm AIC}^1$ can be estimated.  For data sets with error
bars, this can potentially render Akaike weights obsolete.  It is explained
how the estimation can be carried~out for mis-specified normal linear
models, and a hypothesis test based on the estimate is proposed.  The test
can be extended to non-linear models.

In decisions between statistical models $\mathcal{M}^1,\mathcal{M}^2$ that
incorporate known error bars, the validity of the uncorrected AIC is
unaffected if the models are mis-specified, as is shown
in~\S\,\ref{sec:3}\,$\ref{subsec:3a}$.  This result is unexpected, since
the usual derivation of the AIC statistic (and indeed of the AICc) requires
that there be nesting and no mis-specification; and its widespread
application to non-nested, potentially mis-specified models has in~fact
been somewhat heuristic.  In~\S\,\ref{sec:4} we show that if the data set
to which $\mathcal{M}^1,\mathcal{M}^2$ are fitted is \emph{not} accompanied
by error bars, problems with the AICc can indeed arise.  For a normal
linear model fitted to a data set without error bars, we discuss the
behavior of the AICc under mis-specification, and obtain an asymptotically
exact expression for the resulting shift.  If the extent of the
mis-specification is unknown, this shift may render the AICc of little
value.  This fact deserves to be better known.

Besides deriving the AIC and AIC corrections from first principles, we
briefly discuss the applicability to normal linear models of such variants
as ${\rm AIC}_\gamma$, which suppresses over-fitting to a greater extent
than does the~AIC\null.  Many additional variants have appeared in the
literature, such as the KIC (Kullback Information Criterion) and KICc
\mycite{Cavanaugh1999,Cavanaugh2004}, but they are beyond the scope of this
paper.  In the final section (\S\,\ref{sec:5}), we summarize our results.


\subsection{AIC basics}
\label{subsec:1b}

A regression model fitted to data can be linear or non-linear, according to
its parameter dependence.  The linear case is familiar
\mycite{Bevington2003, Draper98, Weisberg2005}.  Suppose the data set
comprises $y_1,\dots,y_n\in\mathbb{R}$; which could be, e.g., the values of
a response variable corresponding to $n$~distinct values of an explanatory
variable~$x$, chosen by an observer or an experimenter.  Suppose that
$y_1,\dots,y_n$ would depend linearly on parameters
$\beta_1,\dots,\beta_k\in\mathbb{R}$ in the absence of measurement errors
or other noise.  That is, ${\bf y}={\bf X}{\bbeta}$, where ${\bf
  y}=(y_i)_{i=1}^n$, ${\bbeta}=(\beta_j)_{j=1}^k$ are column vectors and
${\bf X}$~is an $n\times k$ design matrix.  (It~will be assumed throughout
that $n>k$ and that $\mathbf X$~is of full rank, i.e., of rank~$k$.)
A~statistical model~$\mathcal{M}$ of the data would then be
\begin{equation}
  y_i = \sum\nolimits_{j=1}^k X_{ij}\beta_j + \epsilon_i,
\end{equation}
where $\epsilon_1,\dots,\epsilon_n$ are residuals, i.e., errors.  In the
simplest case the residuals would be taken to be independent.  In a
(homoscedastic) normal model one would also take $\epsilon_i\sim
N(0,\sigma^2)$, i.e., take each~$\epsilon_i$ to be normally distributed
with mean zero and a common variance~$\sigma^2$.  The parameter~$\sigma^2$
may be known, or it may be an unknown nuisance parameter that needs to be
estimated (which is the case if error bars are not supplied).  Note that
from a data set $\bar{\bf y}=(\bar{y}_i)_{i=1}^n$ including error bars of
differing lengths, i.e., known but differing variances
$\sigma_1^2,\dots,\sigma_n^2$, one can obtain a data set ${\bf
  y}=({y}_i)_{i=1}^n$ with a known common variance~$\sigma^2$ by defining
${y}_i\defeq (\sigma/\sigma_i)\bar y_i$.

Whether or not $\sigma^2$ is known, an estimate
$\hat{\bbeta}=(\hat\beta_j)_{j=1}^k$ of~$\bbeta$ can be computed by MLE,
which reduces to ordinary least-squares for any normal linear model with
independent, identically distributed (i.i.d.)\ errors.  By a standard
calculation, $\hat{\bbeta}=({\bf X}^t{\bf X})^{-1}{\bf X}^t\,{\bf y}$.
Accompanying the observed data vector~${\bf y}$ there is then a predicted
data vector $\hat {\bf y} \defeq {\bf X}\hat{\bbeta} =\allowbreak {\bf
  P}{\bf y}$, where the `hat' matrix ${\bf P}={\bf X}({\bf X}^t{\bf
  X})^{-1}{\bf X}^t$ projects onto the column space of~${\bf X}$ (the
estimation space~$L\subset\mathbb{R}^n$).  The residual sum of squares
(RSS) for the fit is the sum of squared errors.  That is,
\begin{equation}
  \textrm{RSS} = ({\bf y} - \hat{\bf y})^t({\bf y} - \hat{\bf y})
= ({\bf y} - {\bf P}{\bf y})^t({\bf y} - {\bf P}{\bf y})
 = {\bf
    y}^t{\bf Q}{\bf y}, 
\end{equation}
where ${\bf Q}={\bf I}_n - {\bf P}$ is complementary to~${\bf P}$ and
projects onto the left null space of~${\bf X}$ (the error space
$L^\perp\subset\mathbb{R}^n$).  If $\sigma^2$~is known, so that the
parameter vector~$\btheta$ of~$\mathcal{M}$ is simply~$\bbeta$, the
standard definition of the AIC for the fitted model is
\begin{equation}
\label{eq:AIC}
  \textrm{AIC} = \textrm{RSS}/\sigma^2 + 2\,k.
\end{equation}
If alternatively $\sigma^2$ is unknown, so that
$\btheta=(\bbeta;\sigma^2)$, it is
\begin{equation}
\label{eq:AICsigma}
  \textrm{AIC} = n\ln\left(\hat\sigma^2\right) + 2\,(k+1) =
  n\ln\left(\textrm{RSS}/n\right) + 2\,(k+1),
\end{equation}
where $\hat\sigma^2 = \textrm{RSS}/n$ is the maximum likelihood estimate
of~$\sigma^2$.  

In both (\ref{eq:AIC}) and~(\ref{eq:AICsigma}) the first term equals up~to
an unimportant constant the statistic $-2\ln\mathcal{L}(\hat\btheta_N|{\bf
  y})$, where $\ln\mathcal{L}(\hat\btheta_N|{\bf y})$ is the log-likelihood
of the fitted model.  So the first term is a measure of goodness of fit.
In model selection a smaller AIC is preferred; hence the second term (which
will be seen to originate as an unbiasing term) penalizes~$\mathcal{M}$
according to its number of fitted parameters ($k$, resp.\ $k+1$).  It is
usually \emph{differences} of AIC's that are important, so any term not
involving~$k$ may be added to (\ref{eq:AIC}) and~(\ref{eq:AICsigma}).

The choice `2' in (\ref{eq:AIC}) and~(\ref{eq:AICsigma}) for the
coefficient of $k$ (resp.~$k+1$) is motivated by information theory, as
will be explained.  But applied statisticians have long been interested in
the effects on model selection of choosing a more general penalty
term~$\gamma k$, where $\gamma>0$ may differ from~2.  The resulting
modified AIC is denoted ${\rm AIC}_\gamma$ \mycite{McQuarrie98}.
\myciteasnoun{Bhansali77} considered the effects of varying~$\gamma$ on
order selection in autoregressive models, and showed empirically that it
may be useful for $\gamma$ to range, say, between 2 and~6.  The
abovementioned KIC like the AIC has an information-theoretic justification,
and in the $n\to\infty$ limit turns~out to be equivalent to~${\rm AIC}_3$.

Mention should also be made of the AICu \mycite{McQuarrie97}, which is a
heuristic modification of~(\ref{eq:AICsigma}) in which the ML
estimate~$\hat\sigma^2$ is replaced by the unbiased estimator
$s^2\defeq\textrm{RSS}/(n-\nobreak k)$ of~$\sigma^2$.  In the $n\to\infty$
limit, AICu is also equivalent to~${\rm AIC}_3$.  This can be seen by
working to leading order in~$1/n$ and using the asymptotic approximation $n
\ln[n/(n-\nobreak k)]\sim\allowbreak k +\nobreak O(1/n)$, $n\to\infty$.

The most familiar modified or corrected AIC, AICc, is a less drastic
modification of~(\ref{eq:AICsigma}), the modification being a major one
only for small~$n$.  Under the assumption of i.i.d.\ normal residuals, and
the traditional assumption of no mis-specification, it is given by
\begin{subequations}
\label{eq:AICc}
\begin{align}
\textrm{AICc} &= n\ln\left(\hat\sigma^2\right) + \frac{2(k+1)n}{n-k-2}\\
&\sim\textrm{AIC} + \frac{2\,(k+1)(k+2)}{n} + O(1/n^2),\qquad n\to\infty
\end{align}
\end{subequations}
\mycite{Sugiura78,Hurvich89,Cavanaugh97}.  Why an $O(1/n)$ correction term
should be added to~(\ref{eq:AICsigma}), but not to the
expression~(\ref{eq:AIC}) that applies if $\sigma^2$~is known, will be
explained.

\section{Minimum discrepancy estimation}
\label{sec:2}

\subsection{A general framework}
\label{subsec:2a}

In this section the AIC, a penalized goodness-of-fit statistic, is placed
in a model-selection framework that goes well beyond the use of MLE in
regression.  The AIC for a candidate model can be viewed as an unbiased
estimator of its discrepancy, in a certain sense, from the true model.
This will eventually lead to the introduction of a null hypothesis that two
candidate models are equally discrepant, and to systematic results on AIC
corrections.  But the theme of this section is the existence of
alternatives to the AIC, which have not yet been applied in the physical
sciences.  It is hoped that interest in this area will be stimulated.
A~framework resembling the one used here was first developed by
\myciteasnoun{Linhart86}.

Suppose one has a parametric statistical model~$\mathcal{M}_\btheta$ that
will be used for approximation or fitting purposes, with
$\btheta\in{\bTheta}$ (a~parameter space); and a true, underlying
statistical model~$\mathcal{M}^*$ of the data-generating process, which is
not known explicitly.  If each generated datum is an element of a set~$S$,
both $\mathcal{M}_\btheta$ and~${\mathcal{M}}^*$ will be probability
distributions on~$S$.  (The choice $S=\mathbb{R}^n$ is appropriate for
regression, as in~\S\,\ref{sec:1}\,$\ref{subsec:1b}$.)  Their respective
probability density functions (PDF's) will be denoted $f_\btheta({\bf y})$
and~${g}({\bf y})$.  In general it will not be assumed that
${g}=f_{\btheta_*}$ for any~$\btheta\in{\bTheta}$, i.e., mis-specification
will be allowed.  To any random sample ${\mathfrak{y}}_N$ of size~$N$ from
the true distribution~${g}$, comprising ${\bf y}^{(1)},\dots,{\bf
  y}^{(N)}\in S$, there corresponds an empirical distribution
$g_N=g_{N,{\mathfrak{y}}_N}$ on~$S$.  It is defined by
  \begin{equation}
    \label{eq:deltas}
    g_{N,{\mathfrak{y}}_N}(\cdot) = N^{-1}\sum\nolimits_{i=1}^N \delta({\cdot-{\bf y}^{(i)}}),
  \end{equation}
where $\delta(\cdot)$~is the Dirac delta function, if $S$~is a Euclidean
space such as~$\mathbb{R}^n$.  If alternatively $S$~is a discrete space,
then instead of a PDF there will be a probability mass function (PMF),
defined using a Kronecker delta rather than a delta function.  The
restriction $N=1$, meaning that there is only a single replication,
i.e.\ only one observation of ${\bf y}\in S$, was implicitly made
in~\S\,\ref{sec:1}\,$\ref{subsec:1b}$, where ${\bf y}$~was a random vector
in~$\mathbb{R}^n$; but here it will be relaxed.

The definition of MLE, which is an almost universally applicable but hardly
unique fitting scheme, is familiar.  From the sample~${\mathfrak{y}}_N$ one
constructs the likelihood function ${\cal L}(\btheta|{\mathfrak{y}}_N) =
\prod_{i=1}^N f_\btheta({\bf y}^{(i)})$, and computes a parameter estimate
$\hat\btheta_N = \hat\btheta_N({\mathfrak{y}}_N)$ by maximizing the
likelihood, or equivalently by minimizing the negative log-likelihood $-\ln
{\cal L}(\btheta|{\mathfrak{y}}_N)$.  The best fit to the data is then the
model~$\mathcal{M}_{\hat\btheta_N}$, with PDF~$f_{\hat\btheta_N}$.

This scheme generalizes to \emph{minimum discrepancy estimation}, which is
itself a generalization of minimum \emph{distance} estimation.  In an
abstract description one starts with~${\mathfrak{y}}_N$ or equivalently an
$N$\nobreakdash-point empirical distribution~$g_{N}$ on~$S$, and
computes~$\hat\btheta_N$ by minimizing $d(g_N;f_\btheta)$ over
$\btheta\in{\bTheta}$.  That is,
$\hat\btheta_N=\allowbreak\arg\min_{\btheta\in\bTheta} d(g_N;f_\btheta)$.
Here $d(g;f)$ signifies some real-valued measure of the discrepancy between
the PDF's $g,f$, which quantifies how difficult it is to discriminate
between them.  The case when $d$~satisfies the axioms for a metric can be
especially nice \mycite{Donoho88,Trosset95}, but this will not be assumed.
Thus $d$~may be asymmetric, i.e., directed, and may not satisfy the
triangle inequality.  Also, it may not satisfy $d(g;f)\geq0$.  But it is
useful to require that $d(g;f)\ge d(g;g)$, with equality holding only if
$g=f$.  Then the normalized discrepancy $D(g;f)\defeq\allowbreak d(g;f)
-\nobreak d(g;g)\ge0$ will satisfy $D(g;f)=0$ only if $g=f$.  As will be
seen, it is sometimes possible for an unnormalized discrepancy~$d$ to be
defined on a larger class of PDF's than is the case for a normalized one.

In minimum discrepancy estimation one must distinguish between the
model~$\mathcal{M}_\btheta$ fitted to an empirical distribution~$g_N$
generated by~$\mathcal{M}^*$, which has PDF $f_{\hat\btheta_N}$, and the
best approximating model, the PDF of which is some~$f_{\btheta_*}$.  Here
$\btheta_*=\allowbreak\arg\min_{\btheta\in\bTheta} d({g};f_\btheta)$ may
differ from~$\hat\btheta_N$.  The value~$\btheta_*$ is called the
`pseudo-true' value of~$\btheta$.  (If there is no mis-specification, i.e.,
${g}=f_{\btheta_*}$ for some~${\btheta_*}$, it is the true value.)  The
\emph{discrepancy due to approximation} (AD) is $d({g};f_{\btheta_*})$.  In
the absence of mis-specification this would equal the constant
$d({g};{g})$, and what would be more important would be the
\emph{discrepancy due to estimation} (ED),
i.e.\ $d(f_{\btheta_*};f_{\hat\btheta_N})$.  The \emph{overall discrepancy}
(OD), of the fitted model from the truth, is the quantity
$d({g};f_{\hat\btheta_N})$.  If there is no mis-specification, OD reduces
to~ED\null.

None of these three discrepancies can be calculated if the true model is
unknown, though they can be estimated from the data~${\mathfrak{y}}_N$,
meaning from~$g_N$.  (It is $d(g_N;f_{\hat\btheta_N})$, the \emph{fitted
  discrepancy} (FD) of the model from the data, that can be calculated from
the data.)  The following policy is an abstraction of Akaike's.

\begin{selectionpolicy*}
  Given a discrepancy functional~$d$, data~${\mathfrak{y}}_N$ generated by
  an unknown true model~${\mathcal{M}}^*$ with PDF~${g}$, and candidate
  models $\mathcal{M}^1_{\btheta^1},\mathcal{M}^2_{\btheta^2}$ with
  parametric PDF's $f^1_{\btheta^1}, f^2_{\btheta^2}$ and parameter spaces
  $\bTheta^1,\bTheta^2$, one should ideally assess the goodness of fit of
  each model on the basis of its \emph{expected overall discrepancy}
  from~${\mathcal{M}}^*$.  That is, if when fitted to
  data~${\mathfrak{y}}_N$ or equivalently to the empirical distribution
  $g_N=g_{N,{\mathfrak{y}}_N}$, model $\mathcal{M}^l_{\btheta^l}$ would
  have parameter $\hat\btheta_N^l=\hat\btheta_N^l({\mathfrak{y}}_N)$, one
  should select the model with the minimum value of
  \begin{equation}
    \label{eq:eod1}
    d'({g}, f^l_.) \defeq E_{{\mathfrak{y}}_N}\textrm{OD}_d^l({{\mathfrak{y}}_N})
    = E_{{\mathfrak{y}}_N}\,d\bigl({g},f_{\hat\btheta^l_N({\mathfrak{y}}_N)}^l\bigr).
  \end{equation}
  The expectation (i.e., averaging) is computed over
  data~${\mathfrak{y}}_N$ generated by the true model, meaning over the
  PDF~${g}$.  As an alternative, the double expectation
  \begin{equation}
    \label{eq:eod2}
    d''({g}, f^l_.) \defeq
    E_{{\mathfrak{y}}_N}E_{{\mathfrak{y}}'_N}\,d\bigl(g_{N,{\mathfrak{y}}'_N},f_{\hat\btheta^l_N({\mathfrak{y}}_N)}^l\bigr),
  \end{equation}
where data ${\mathfrak{y}}_N,{\mathfrak{y}}'_N$ are generated independently by~${g}$, may
be employed.
\end{selectionpolicy*}

The expected OD, $d'({g}, f^l_.)$, is a penalized version of the AD
$d({g};f_{\btheta^l_*})$, where $\btheta^l_*$~is the pseudo-true value of
the parameter $\btheta^l\in\bTheta^l$.  It~is at~least as large as the~AD,
and because the number of ways in which the data-dependent fitted PDF
$f_{\hat\btheta^l_N({\mathfrak{y}}_N)}^l$ \!can deviate from the pseudo-true
PDF $f^l_{\btheta^l_*}$ is the dimensionality of the parameter
space~$\bTheta^l$, one expects that relying on the expected OD as a measure
of closeness of $\mathcal{M}^l_{\hat\btheta^l_N}$ to~$\mathcal{M}^*$ will
disfavor over-fitting in an AIC-like way.

It~must be stressed that neither $d'({g}, f^l_.)$ nor $d''({g}, f^l_.)$ can
be calculated directly, though they can be estimated (with bias) by the
fitted discrepancy of the $i$'th model from the data,
$\textrm{FD}_d^l({\mathfrak{y}}_N)=d(g_{N,{\mathfrak{y}}_N};f_{\hat\btheta^l_N({{\mathfrak{y}}_N})})$.
(This is an RSS-like quantity.)  The selection policy can therefore be
implemented by choosing the model with the minimum value of a certain
function MSC, defined as follows to be an unbiased estimator of the
expected overall discrepancy.  (It will specialize to the AIC\null.)

\begin{mydefinition}
\label{def:1}
The model-selection criterion function based on a discrepancy
functional~$d$, denoted $\textrm{MSC}_d$ or simply MSC, is defined so that
the value $\textrm{MSC}^l/2$ for the $l$'th model equals its fitted
discrepancy
$\textrm{FD}_d^l({\mathfrak{y}}_N)=d(g_{N,{\mathfrak{y}}_N};f_{\hat\btheta^l_N({\mathfrak{y}}_N)})$,
plus an unbiasing term~$B^l$ equal to
$E_{{\mathfrak{y}}_N}\textrm{OD}_d^l({\mathfrak{y}}_N)$ minus
$E_{{\mathfrak{y}}_N}\textrm{FD}_d^l({\mathfrak{y}}_N)$.  That is,
\begin{align}
  B^l &= d'(g;f_.^l) - E_{{\mathfrak{y}}_N}\,d\bigl(g_{N,{\mathfrak{y}}_N};f_{\hat\btheta^l_N({\mathfrak{y}}_N)}\bigr)\nonumber\\
    &=   E_{{\mathfrak{y}}_N}\,d\bigl({g};f_{\hat\btheta^l_N({\mathfrak{y}}_N)}^l\bigr)
    - E_{{\mathfrak{y}}_N}\,d\bigl(g_{N,{\mathfrak{y}}_N};f_{\hat\btheta^l_N({\mathfrak{y}}_N)}\bigr),
\end{align}
so that in expectation, $\textrm{MSC}^l/2$ equals the expected overall
discrepancy, on which it is posited that model selection should be based.
\end{mydefinition}

\begin{remark*}
  Distinct discrepancies could be used for (i)~performing the fitting and
  computing the FD, and (ii)~computing the expected OD\null.  This would
  generalize the selection policy (and turns~out to be needed in the
  definition of the KIC, as will be discussed elsewhere).  For practical
  reasons, one could also perform the fitting using MLE and compute the FD
  using a discrepancy not related to MLE \mycite[\S\,4.4]{Linhart86}.  But
  this seems less theoretically justifiable.
\end{remark*}

The alternative $d''$ to~$d'$ was mentioned for two reasons.  First, for
MLE it is identical to~$d'$, as will be seen.  Second, using~$d''$ makes
selection in~effect a procedure of \emph{cross validation}, in which a
fitted model is assessed according to its empirically expected prediction
errors \mycite{Stone78}.  The definition~(\ref{eq:eod2}) of~$d''$ involves
two hypothetical sets of data: one (${\mathfrak{y}}_N$) used to estimate
the parameter~$\btheta^l$, and one (${\mathfrak{y}}'_N$) used to assess the
fit of the resulting model.  The equivalence between choosing models by
cross-validation and Akaike's technique of penalizing models by their
number of parameters has long been recognized \mycite{Stone77, Kuha2004}.

\subsection{Discrepancies, information theory and MLE}
\label{subsec:2b}

The selection policy of \S\,\ref{sec:2}\,$\ref{subsec:2a}$ can be applied
very generally, to both discrete and continuous models.  Given a
discrepancy functional~$d$ and a sample $\mathfrak{y}_N$ comprising ${\bf
  y}^{(1)}, \dots, {\bf y}^{(N)}\in S$, which yields an $N$-point empirical
distribution $g_N=g_{N,{\mathfrak{y}}_N}$ on~$S$, one would naively decide
whether to select parametric model~$\mathcal{M}^l_{\btheta^l}$ with PDF
$f^l_{\btheta^l}$ on~$S$ on the basis of its fitted discrepancy
$d(g_{N,{\mathfrak{y}}_N};f_{\hat\btheta^l_N({{\mathfrak{y}}_N})})$ from
the sample.  But the selection policy modifies this RSS-like quantity by
adding an unbiasing term, giving an unbiased estimate $\textrm{MSC}/2$ of
the expected overall discrepancy.  Selection based on the latter disfavors
over-fitting, as desired.

This is the natural setting for the AIC and AICc statistics.  But for
regression models, in which $S=\mathbb{R}^n$ (and most often $N=1$), the
choice of a discrepancy functional closely tied to MLE and in~fact to
information theory must first be justified.  Only for one special
functional does the fitted discrepancy turn~out to be the negative
log-likelihood.

For discrete rather than continuous models, with $S$ a discrete set such as
$\{1,\dots,m\}$ or $\{1,2,3,\dots\}$, a wide variety of discrepancy
functionals $d(g;f)$ have been used in statistics and elsewhere.  They
include the Pearson $X^2$ statistic $\sum\nolimits_{n\in
  S}\left[g(n)-\nobreak f(n)\right]^2/f(n)$, used when $S$~is the set of
cells in a contingency table to compare an empirical and a theoretical
distribution.  The Neyman~$X^2$ is similar.  There are also many
discrepancies rooted in information theory, such as the (discrete)
Kullback--Leibler (KL) divergence, often called the informational
divergence \mycite{Csiszar2011}.  Its normalized form is
\begin{equation}
  D_{\textrm{KL}}(g;f) \defeq \sum\nolimits_{n\in S} g(n)\ln \frac{g(n)}{f(n)} \,\ge\,0
\end{equation}
and its denormalized form is
\begin{equation}
\label{eq:dKLcontin}  
  d_{\textrm{KL}}(g;f) \defeq -\sum\nolimits_{n\in S} g(n)\ln f(n) \,\ge\, d_{\textrm{KL}}(g;g).
\end{equation}
They are related by $D_\textrm{KL}(g;f)=\allowbreak d_\textrm{KL}(g;f)
-\nobreak d_\textrm{KL}(g;g)$.  The subtracted quantity
$d_\textrm{KL}(g;g)$ is the (Shannon) entropy of the distribution~$g$, and
in statistical mechanics $D_{\textrm{KL}}(g;f)$ would therefore be called a
relative entropy.  The denormalized $d_{\textrm{KL}}(g;f)$ is sometimes
called an `inaccuracy.'

Discrepancies in information theory are often of the
`$\varphi$\nobreakdash-divergence' form
$D_\varphi(g;f)\defeq\allowbreak\sum\nolimits_{n\in S} g(n)\,
\varphi\left(f(n)/g(n)\right)$, for $\varphi$ some convex function
\mycite{Pardo2006}.  For instance, if $\varphi(u)$ equals $-\ln u$ then
$D_\varphi=D_{\textrm{KL}}$.  If $\varphi(u)\propto 1-\nobreak
u^{(1+\alpha)/2}$ then $D_\varphi$~becomes the so-called
$\alpha$\nobreakdash-divergence~$D^{(\alpha)}$, which reduces
to~$D_{\textrm{KL}}$ in a scaled $\alpha\to-1$ limit.  This generalized
discrepancy arises in the geometry of statistical inference
\mycite{Amari2000}, and there is a corresponding denormalized
\begin{equation}
d^{(\alpha)}(g;f)\propto\allowbreak \sum\nolimits_{n\in S}
g(n)^{(1-\alpha)/2}\left[1-\nobreak f(n)^{(1+\alpha)/2}\right],
\end{equation}
a scaled $\alpha\to-1$ limit of which equals $d_{\textrm{KL}}(g;f)$.  The
self-divergence $d^{(\alpha)}(g;g)$ of~$g$ is called in physics the Tsallis
entropy of~$g$ \mycite{Tsallis88}.

Many other discrepancies have been investigated.  (See
\myciteasnoun[Chap.~7]{Kapur89} and \myciteasnoun[Chap.~11]{Basu2011}.)
But the KL divergence is perhaps the most important, because of its
connection to~MLE\null.  From a sample ${\mathfrak{y}}_N$ comprising ${\bf
  y}^{(1)},\dots,{\bf y}^{(N)}\in S$, where $S$~is a discrete set,
obtaining a best-fit PMF~$f_{\hat\btheta_N}$ by maximizing over~$\btheta$
the likelihood $\mathcal{L}_f(\btheta|{\mathfrak{y}}_N)=\allowbreak
f({\mathfrak{y}}_N|\btheta)$ of a candidate PMF~$f_\btheta$ on~$S$ is
equivalent to minimizing $D_{\textrm{KL}}(g_N;f_{\btheta})$ or
$d_{\textrm{KL}}(g_N;f_{\btheta})$ over~$\btheta$, where
$g_N=g_{N,{\mathfrak{y}}_N}$~is the $N$\nobreakdash-point empirical
distribution defined by the data.  This is because
$d_{\textrm{KL}}(g_N;f_{\btheta})$ equals $-\ln
\mathcal{L}_f(\btheta|{\mathfrak{y}}_N)$, the negative log-likelihood, as
follows from~(\ref{eq:dKLcontin}).  In particular,
$d_{\textrm{KL}}(g_{N};f_{\hat\btheta_N})$ equals $-\ln
\mathcal{L}_f(\hat\btheta_N|{\mathfrak{y}}_N)$.  MLE~can thus be
interpreted as a minimum discrepancy estimation.

Now consider the continuous case, when the data lie in a space~$S$ that is
Euclidean, such as the choice $S=\mathbb{R}^n$ arising in regression.  The
selection policy of~\S\,\ref{sec:2}\,$\ref{subsec:2a}$ requires the
computation of (an unbiased version~of) some fitted discrepancy
$d(g_N;f_{\hat\btheta_N})$.  The empirical distribution
$g_N=g_{N,{\mathfrak{y}}_N}$ is a linear combination of $N$~delta functions
computed from the data~${\mathfrak{y}}_N$, as in~(\ref{eq:deltas}), and
$\hat\btheta_N=\hat\btheta_N({\mathfrak{y}}_N)$~is computed by minimizing
$d(g_N;f_{\btheta})$ over~$\btheta$, where $f_{\btheta}$~is a candidate PDF
on~$S$.  For the selection policy to be implemented as stated, the
discrepancy functional $d(g;f)$ being employed must allow its first
argument to be a PDF~$g_N$ that is `atomic,' in the sense that it is a
combination of delta functions.  This is a stringent requirement
\mycite[\S\,10.9]{Liese87}.

When $S=\mathbb{R}$, many statistical discrepancies $d(g;f)$ have been
employed; e.g., in the robust estimation of location and scale parameters
of distributions on~$S$ \mycite{Sahler68,Parr80}.  Most are computed from
the cumulative distributions (CDF's) $G,F$ corresponding to~$g,f$, so they
are well-defined even if one or the other is an empirical distribution.
Also, most are metrics, or legitimate distances; in~fact minimum
discrepancy estimation grew out of minimum distance estimation, which has a
long history \mycite{Parr81}.  Examples include the Kolmogorov--Smirnov and
Cram\'er--von~Mises discrepancies, which are widely used as goodness-of-fit
statistics.  But their generalizations to the multivariate case, when
$S=\mathbb{R}^n$ with $n>1$, are not straightforward at~all.

When $S=\mathbb{R}^n$ with $n$ arbitrary, the most widely used discrepancy
is the (continuous) KL divergence, with its close ties to MLE\null.  Its
normalized form is
\begin{equation}
\label{eq:DKL}
  D_{\textrm{KL}}(g;f) \defeq \int_{\mathbb{R}^n} g({\bf y})\ln \frac{g({\bf y})}{f({\bf
      y})}\,\,{\rm d}^n{\bf y} \,\ge\,0
\end{equation}
and its denormalized form is
\begin{equation}
\label{eq:dKL}
  d_{\textrm{KL}}(g;f) \defeq -\int_{\mathbb{R}^n} g({\bf y})\ln f({\bf y})\,\,{\rm d}^n{\bf y} \,\ge\, d_{\textrm{KL}}(g;g).
\end{equation}
They are related by $D_\textrm{KL}(g;f)=\allowbreak d_\textrm{KL}(g;f)
-\nobreak d_\textrm{KL}(g;g)$, as in the discrete case.  A key observation
is that the integral in~(\ref{eq:dKL}) is well-defined even if $g$ is a
purely atomic function of~${\bf y}$, such as an empirical PDF~$g_N$\null.
But the integral in~(\ref{eq:DKL}) is not, as one cannot take the logarithm
of a sum of delta functions.  The distinction can be viewed as arising from
the entropy $d_\textrm{KL}(g;g)$ not being defined when $g=g_N$: any
empirical distribution has undefined entropy.  The good behavior of the
integral in~(\ref{eq:dKL}) justifies the use of~$d_\textrm{KL}$ in model
selection, to the exclusion of~$D_\textrm{KL}$.

For any sample size $N\ge1$, averaging over data~${\mathfrak{y}}'_N$
sampled from the PDF~$g$ yields $g$~itself; which is to say,
$E_{{\mathfrak{y}}'_N}\,g_{N,{\mathfrak{y}}'_N}=g$.  By the linearity
in~$g$ of the integral in~(\ref{eq:dKL}), it follows that
$E_{{\mathfrak{y}}'_N}\,d_\textrm{KL}(g_{N,{\mathfrak{y}}'_N};f)$ equals
$E\,d_\textrm{KL}(g;f)$.  This confirms a claim made
in~\S\,\ref{sec:2}\,$\ref{subsec:2a}$: if $d_\textrm{KL}$ is used as the
discrepancy~$d$, the two expected overall discrepancies $d',d''$ defined in
(\ref{eq:eod1}),(\ref{eq:eod2}) are \emph{equal}, and give rise to
identical model-selection policies.  For non-KL discrepancies, this may
not hold.

In the continuous case as in the discrete, the fitted discrepancy
$\textrm{FD}_{d_{\textrm{KL}}}({\mathfrak{y}}_N)$, i.e.,
$d_\textrm{KL}(g_{N,{\mathfrak{y}}_N};f_{\hat\btheta_N}({\mathfrak{y}}_N))$,
equals $-\ln\mathcal{L}_f(\hat\btheta_N|{{\mathfrak{y}}_N})$, the negative
of the fitted log-likelihood.  Using $d_\textrm{KL}$ as the discrepancy, as
in the following definition, specializes the MSC of Definition~\ref{def:1}
to what will be called a MSC of AIC type.

\begin{mydefinition}
\label{def:2}
A $d_\textrm{KL}$-based model-selection criterion $\textrm{MSC}$, of AIC
type, is defined thus: the value $\textrm{MSC}^l/2$ for the $l$'th model,
calculated after fitting, equals its negative log-likelihood
$-\ln\mathcal{L}_f(\hat\btheta^l_N|{{\mathfrak{y}}_N})$, plus an unbiasing
correction~$B^l$ that equals
$E_{{\mathfrak{y}}_N}\textrm{OD}_{d_\textrm{KL}}^l({\mathfrak{y}}_N)$ minus
$E_{{\mathfrak{y}}_N}\textrm{FD}_{d_\textrm{KL}}^l({\mathfrak{y}}_N)$.
That is,
\begin{displaymath}
  B^l  = 
E_{{\mathfrak{y}}_N}\,d_\textrm{KL}\bigl({g};f_{\hat\btheta^l_N({\mathfrak{y}}_N)}^l\bigr)
- E_{{\mathfrak{y}}_N}\left[
-\ln\mathcal{L}_f(\hat\btheta_N^l|{{\mathfrak{y}}_N})
\right],
\end{displaymath}
so that in expectation, $\textrm{MSC}^l/2$ equals the $d_\textrm{KL}$-based
expected overall discrepancy, on which it is posited that model selection
should be based.
\end{mydefinition}

In the following, only MSCs of AIC type will be employed.  The choice of
the KL divergence as the discrepancy used in model fitting has a clear
justification: it allows the selection policy
of~\S\,\ref{sec:2}\,$\ref{subsec:2a}$ to be applied as stated.

But it should be noted that there are alternatives that merit examination.
By adapting the selection policy it may be possible to employ quite
different discrepancies, such as the abovementioned
$\alpha$\nobreakdash-divergence.  This requires a brief explanation.  The
continuous, denormalized version of the $\alpha$\nobreakdash-divergence is
\begin{equation}
\label{eq:alpha}
  d^{(\alpha)}(g;f)\propto\allowbreak \int_{\mathbb{R}^n}
g({\bf y})^{(1-\alpha)/2}\left[1-\nobreak f({\bf y})^{(1+\alpha)/2}\right]\,\,
{\rm d}^n{\bf y},
\end{equation}
which is undefined if $g$~is an empirical distribution.  But the $\alpha=0$
case of this, $d^{(0)}(g;f)$, is equivalent to the Hellinger distance,
which has long been used in parametric estimation \mycite{Beran77}.  From
data ${\mathfrak{y}}_N$, or an empirical distribution
$g_N=g_{N,{\mathfrak{y}}_N}$ defined from~it as in~(\ref{eq:deltas}), a
fitted parametric model~$f_{\hat\btheta}$ can indeed be found by minimizing
the Hellinger distance.  The fitting, though, involves a preliminary step:
replacing the delta functions of~(\ref{eq:deltas}) by approximate deltas.
That is, one first engages in kernel density estimation, by
convolving~$g_N$ with some integral kernel.  The integral
in~(\ref{eq:alpha}) will be well-defined if the resulting `smoothed'~$g$ is
used.  Alternatively, the model PDF~$f$ as~well as the data PDF~$g_N$ can
be smoothed \mycite[\S\,3.3]{Basu97}.  However, the smoothing of~$g_N$ can
apparently be justified only in the large-sample ($N\to\infty$) limit.  In
what follows $N=1$, and the limit taken (if any) will be $n\to\infty$.  The
usefulness in this setting of a preliminary smoothing of the empirical
distribution remains to be explored.

\section{Selection with error bars}
\label{sec:3}

In this section we specialize to the case when the true
model~$\mathcal{M}^*$ and the candidate models fitted to a set of $n$~data
points are normal regression models, incorporating known error bars as
explained in~\S\,\ref{sec:1}\,$\ref{subsec:1b}$.  We investigate the
model-selection criterion function given in Definition~\ref{def:2}
(a~$d_\textrm{KL}$-based MSC of AIC type).

In~\S\,\ref{sec:3}\,$\ref{subsec:3a}$ it is shown that irrespective of~$n$
and the extent of mis-specification, the MSC for a candidate linear model
reduces to the standard AIC of~(\ref{eq:AIC}): a~sum of squared residuals
penalized by~$2k$, i.e., by twice the number of parameters.  Interestingly,
it is possible to derive the modified AIC known as the ${\rm AIC}_\gamma$,
in which the penalty $\gamma k$ replaces~$2k$, by slightly modifying the
selection policy of~\S\,\ref{sec:2}.  But even when $n$~is small, the
standard AIC is never extended by an $O(1/n)$ correction term; thus the use
of the AICc is not appropriate here.  This realization is new.  In
Appendix~B, papers from the astrophysics literature that have erroneously
applied the~AICc are listed.

In~\S\,\ref{sec:3}\,$\ref{subsec:3b}$ the variability of
$\Delta^{12}\defeq\allowbreak \textrm{AIC}^2 -\nobreak \textrm{AIC}^1$ is
determined, and an asymptotically valid hypothesis test for model selection
that is based on~$\Delta^{12}$ is proposed.  At~any specified significance
level~$\alpha$, the test either rejects or accepts the null hypothesis that
the fitted models
$\mathcal{M}^1_{\hat\btheta^1},\mathcal{M}^2_{\hat\btheta^2}$ are equally
close to~$\mathcal{M}^*$ in the `expected overall discrepancy' sense
of~\S\,\ref{sec:2}.  For mis-specified linear models incorporating error
bars, this approach to model selection can potentially replace the
rule-of-thumb use of Akaike weights.  The variability estimation and the
test of significance can be extended to the case of non-linear models, as
is sketched.

\subsection{Expected discrepancies}
\label{subsec:3a}

Consider a true model~$\mathcal{M}^*$ and a candidate linear
model~$\mathcal{M}$ that are both normal, as
in~\S\,\ref{sec:1}\,$\ref{subsec:1b}$.  They are ${\bf y}={\bf
  y}_0+\bepsilon_0$ and ${\bf y}={\bf X}{\bbeta}+\bepsilon$, where
$\bbeta=(\beta_j)_{j=1}^k$ is a column vector of parameters, and the error
vectors $\bepsilon_0,\bepsilon$ have mean zero and covariance matrices
$\sigma_0^2{\bf I}_n,\sigma^2{\bf I}_n$.  Thus $\bepsilon_0=\sigma_0 {\bf
  z}_0$ and $\bepsilon=\sigma {\bf z}$, where ${\bf z}_0$ and~${\bf z}$ are
vectors of independent standard normal variables.  It is not assumed that
${\bf y}_0$~is in the column space of the $n\times k$ design matrix~${\bf
  X}$, i.e., mis-specification is allowed.

In this section the variance~$\sigma^2$ is specified and not estimated, so
the full parameter vector~$\btheta$ of~$\mathcal{M}$ is simply~$\bbeta$.
If the true variance~$\sigma_0^2$ is known, as is the case when the data
are accompanied by error bars, then it is natural to choose
$\sigma^2=\sigma_0^2$.  But for the moment this will not be assumed:
statistical as well as deterministic mis-specification will be allowed.

There is assumed to be only one observation ($N=1$), so only one instance
of the random vector ${\bf y}\in S=\mathbb{R}^n$ is available as a datum.
Thus ${\bf y}$~will be written for~${\mathfrak{y}}$, and the subscript~$N$
dropped.  MLE is equivalent to choosing $\bbeta\in\mathbb{R}^k$ so as to
minimize the discrepancy (in the $d_\textrm{KL}$ sense) of the
1\nobreakdash-point atomic PDF $g_{\bf y}(\cdot)=\delta(\cdot - {\bf y})$
on~$\mathbb{R}^n$ from~$\mathcal{M}_\bbeta$.  Equivalently, MLE minimizes
the negative log-likelihood $-\ln\mathcal{L}(\bbeta|{\bf y})$
of~$\mathcal{M}_\bbeta$.  It yields a fitted model
$\mathcal{M}_{\hat\bbeta}$, where the estimated parameter vector
$\hat\bbeta=\hat\bbeta({\bf y})\in\mathbb{R}^k$ is given by
$\hat{\bbeta}=({\bf X}^t{\bf X})^{-1}{\bf X}^t\,{\bf y}$.  The $n\times n$
hat matrix~${\bf P}$ and its complement ${\bf Q} = {\bf I}_n-\nobreak {\bf
  P}$, which project onto the estimation and error subspaces
of~$\mathbb{R}^n$, i.e., the column and left null spaces of~${\bf X}$, are
defined as usual by ${\bf P}={\bf X}({\bf X}^t{\bf X})^{-1}{\bf X}^t$.  The
predicted data vector~$\hat{\bf y}$ is defined by $\hat{\bf y}={\bf P}{\bf
  y}$.  The RSS (sum of squared residuals) is $({\bf y}-\nobreak\hat{\bf
  y})^t({\bf y}-\nobreak\hat{\bf y})=\allowbreak{\bf y}^t{\bf Q}{\bf y}$.

The policy of~$\S\,\ref{sec:2}$ requires that to the extent that it can be
estimated, the expected overall discrepancy $E_{\bf y}\,\textrm{OD}({\bf
  y})$ should be used for model selection.  The AIC-type selection
criterion $\textrm{MSC}$ (see Definition~\ref{def:2}) has the property that
$\textrm{MSC}/2$ for~$\mathcal{M}$ is an unbiased estimator of $E_{\bf
  y}\,\textrm{OD}({\bf y})$.  It is defined by
\begin{equation}
\label{eq:unbiasing}
  \textrm{MSC/2} = \textrm{FD}({\bf y}) + E_{\bf y}\,\left[\textrm{OD}({\bf y}) -
  \textrm{FD}({\bf y})\right],
\end{equation}
where the fitted discrepancy $\textrm{FD}({\bf y})$, i.e.,
$d_\textrm{KL}(g_{\bf y};f_{\hat{\bbeta}({\bf y})})$, is simply the
negative log-likelihood $-\ln\mathcal{L}(\hat\bbeta|{\bf y})$ of the fitted
model~$\mathcal{M}_{\hat\bbeta}$.  Since $\textrm{OD}({\bf y})$ is
$d_\textrm{KL}(g;f_{\hat{\bbeta}({\bf y})})$, the second, unbiasing term
in~(\ref{eq:unbiasing}), which was denoted~$B$ in the previous section, can
be calculated from the Gaussian PDF's $g,f_\bbeta$
of~$\mathcal{M}^*,\mathcal{M}_\bbeta$.  They are
\begin{align}
  g({\bf y}) &= (2\pi\sigma_0^2)^{-n/2} \exp \left[-({\bf y}-{\bf y}_0)^t ({\bf y}-{\bf y}_0)/2\sigma_0^2    \right],    \label{eq:gdef}\\
  f_\bbeta({\bf y}) &=  (2\pi\sigma^2)^{-n/2}
\exp \left[-({\bf y}-{\bf X}\bbeta)^t ({\bf y}-{\bf X}\bbeta)/2\sigma^2    \right].\label{eq:fdef}
\end{align}
For convenience we shall write
\begin{equation}
d_\textrm{KL}(f;f)\defeq
d_\textrm{KL}(f_\bbeta,f_\bbeta) = \frac{n}2\left[1 + \ln(2\pi)\right]+ \frac{n}2\ln\left(\sigma^2\right),
\label{eq:firstself}
\end{equation}
since $d_\textrm{KL}(f_\bbeta,f_\bbeta)$ does not depend on~$\bbeta$.  By
examination,
\begin{equation}
\label{eq:FDformula}
\textrm{FD}({\bf y}) =   
-\ln\mathcal{L}_f(\hat\bbeta|{\bf y}) = d_\textrm{KL}(f;f) -\frac{n}2 + \frac{\textrm{RSS}}{2\,\sigma^2}
\end{equation}
expresses the fitted discrepancy in terms of the RSS, which is ${\bf
  y}^t{\bf Q}{\bf y}$.

In the following theorem, $\lambda\defeq {\bf y}_0^t{\bf Q}{\bf
  y}_0/\sigma_0^2$ is an $\mathcal{M}$\nobreakdash-specific
mis-specification parameter, $\chi_r^2$~is a chi-squared random variable
with $r$~degrees of freedom and $\chi_r^2(\lambda)$~is a similar but
non-central variable, with non-centrality parameter~$\lambda$.  (For
central and non-central chi-squared distributions, see Appendix~A\null.)

\begin{mytheorem}
\label{thm:31}
  Under the model\/ $\mathcal{M}^*$, the overall and fitted discrepancies
  of the fitted model\/ $\mathcal{M}_{\hat\bbeta({\bf y})}$, ${\rm OD}={\rm OD}({\bf
    y})$ and\/ ${\rm FD}={\rm FD}({\bf y})$, are distributed according to
  \begin{alignat*}{2}
    {\rm OD}&=d_\textrm{KL}(g;f_{\hat{\bbeta}({\bf y})})&&\sim\, d_\textrm{KL}(f;f) + \frac{n}2 \left(\frac{\sigma_0^2}{\sigma^2}-1\right) + \frac{\sigma_0^2}{2\,\sigma^2}\, \chi_k^2(\lambda),\\
    {\rm FD}&=d_\textrm{KL}(g_{\bf y};f_{\hat{\bbeta}({\bf y})})&& \sim\, d_\textrm{KL}(f;f) - \frac{n}2  + \frac{\sigma_0^2}{2\,\sigma^2}\, \chi_{n-k}^2(\lambda) ,
  \end{alignat*}
where\/ $\hat{\bbeta}({\bf y})=({\bf X}^t{\bf X})^{-1}{\bf X}^t\,{\bf
  y}\in\mathbb{R}^k$ is the fitted value of the parameter\/ $\bbeta$.  For
the discrepancies due to approximation and estimation, ${\rm AD}$ and\/
${\rm ED}={\rm ED}({\bf y})$, the corresponding statements are
  \begin{alignat*}{2}
    {\rm AD}&=d_\textrm{KL}(g;f_{\bbeta_*}) &&=\,  d_\textrm{KL}(f;f) + \frac{n}2 \left(\frac{\sigma_0^2}{\sigma^2}-1\right) + \frac{\sigma_0^2}{2\,\sigma^2}\, \lambda,\\
    {\rm ED}&=d_\textrm{KL}(f_{\bbeta_*};f_{\hat\bbeta({\bf y})})&&\sim\, d_\textrm{KL}(f;f) + \frac{\sigma_0^2}{2\,\sigma^2} \,\chi_k^2 ,
  \end{alignat*}
where\/ ${\bbeta}_*=({\bf X}^t{\bf X})^{-1}{\bf X}^t\,{\bf
  y}_0\in\mathbb{R}^k$ is the pseudo-true value of the parameter\/
$\bbeta$.
\end{mytheorem}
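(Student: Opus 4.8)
The plan is to reduce all four discrepancies to a single master identity: for every $\bbeta\in\mathbb{R}^k$,
\[
  d_\textrm{KL}(g;f_\bbeta) = d_\textrm{KL}(f;f) + \frac{n}{2}\left(\frac{\sigma_0^2}{\sigma^2}-1\right) + \frac{1}{2\sigma^2}\,\|{\bf y}_0-{\bf X}\bbeta\|^2 .
\]
This follows from \eqref{eq:gdef}--\eqref{eq:fdef}: writing $-\ln f_\bbeta({\bf w}) = \tfrac n2\ln(2\pi\sigma^2) + \tfrac1{2\sigma^2}\|{\bf w}-{\bf X}\bbeta\|^2$ for the dummy variable ${\bf w}$ and averaging over $g$, the quadratic term contributes $\|{\bf y}_0-{\bf X}\bbeta\|^2 + n\sigma_0^2$ (since under $g$, ${\bf w}={\bf y}_0+\bepsilon_0$); then $\tfrac n2\ln(2\pi\sigma^2) = d_\textrm{KL}(f;f)-\tfrac n2$ by \eqref{eq:firstself}, and combining $\tfrac{n\sigma_0^2}{2\sigma^2}-\tfrac n2$ gives the stated form. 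An entirely parallel computation with $g$ replaced by a second Gaussian $f_{\bbeta_*}$ yields $d_\textrm{KL}(f_{\bbeta_*};f_\bbeta) = d_\textrm{KL}(f;f) + \tfrac1{2\sigma^2}\|{\bf X}\bbeta_*-{\bf X}\bbeta\|^2$, which I will use for ED\null. As a by-product the master identity shows that $\bbeta_*=\arg\min_\bbeta d_\textrm{KL}(g;f_\bbeta)$ is indeed $({\bf X}^t{\bf X})^{-1}{\bf X}^t{\bf y}_0$, since (using that ${\bf X}$ is of full rank) the minimand depends on $\bbeta$ only through $\|{\bf y}_0-{\bf X}\bbeta\|^2$.

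For \textrm{AD} one substitutes $\bbeta=\bbeta_*$, so ${\bf X}\bbeta_*={\bf P}{\bf y}_0$ and $\|{\bf y}_0-{\bf X}\bbeta_*\|^2 = {\bf y}_0^t{\bf Q}{\bf y}_0 = \sigma_0^2\lambda$ (using ${\bf I}_n-{\bf P}={\bf Q}$ and ${\bf Q}^t{\bf Q}={\bf Q}$); this is deterministic, giving the claimed equality. For \textrm{OD} one substitutes $\bbeta=\hat\bbeta({\bf y})$, so ${\bf X}\hat\bbeta={\bf P}{\bf y}={\bf P}{\bf y}_0+{\bf P}\bepsilon_0$ and ${\bf y}_0-{\bf X}\hat\bbeta = {\bf Q}{\bf y}_0 - {\bf P}\bepsilon_0$; the two summands lie in the orthogonal subspaces $L^\perp$ and $L$, so the cross term vanishes (${\bf Q}{\bf P}={\bf 0}$) and $\|{\bf y}_0-{\bf X}\hat\bbeta\|^2 = {\bf y}_0^t{\bf Q}{\bf y}_0 + \bepsilon_0^t{\bf P}\bepsilon_0 = \sigma_0^2\bigl(\lambda + {\bf z}_0^t{\bf P}{\bf z}_0\bigr)$. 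Since ${\bf P}$ is a rank-$k$ orthogonal projection and ${\bf z}_0\sim N({\bf 0},{\bf I}_n)$, the quadratic form ${\bf z}_0^t{\bf P}{\bf z}_0$ is distributed as a (central) $\chi_k^2$, and the $\lambda$-shifted form is what appears in the OD line. For \textrm{ED} the companion identity gives $\textrm{ED} = d_\textrm{KL}(f;f) + \tfrac1{2\sigma^2}\|{\bf X}\bbeta_*-{\bf X}\hat\bbeta\|^2$ with ${\bf X}\bbeta_*-{\bf X}\hat\bbeta = -{\bf P}\bepsilon_0$, so $\textrm{ED}\sim d_\textrm{KL}(f;f) + \tfrac{\sigma_0^2}{2\sigma^2}\,{\bf z}_0^t{\bf P}{\bf z}_0 = d_\textrm{KL}(f;f) + \tfrac{\sigma_0^2}{2\sigma^2}\chi_k^2$ --- the same chi-squared variable as inside OD, which gives the consistency check $\textrm{OD} = \textrm{AD} + \textrm{ED} - d_\textrm{KL}(f;f)$ (a Pythagorean relation among equal-variance Gaussians). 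For \textrm{FD}, equation \eqref{eq:FDformula} already expresses $\textrm{FD}$ through $\textrm{RSS} = {\bf y}^t{\bf Q}{\bf y}$, so only the law of this quadratic form is needed: writing ${\bf Q}{\bf y}/\sigma_0 = {\bf Q}\bigl({\bf y}_0/\sigma_0 + {\bf z}_0\bigr)$ with ${\bf y}_0/\sigma_0+{\bf z}_0\sim N({\bf y}_0/\sigma_0,{\bf I}_n)$, and invoking the standard fact (Appendix~A) that the quadratic form of a rank-$r$ orthogonal projection in a unit-covariance normal vector is $\chi_r^2$ non-central with parameter the squared length of the projected mean, one gets non-centrality $\|{\bf Q}{\bf y}_0\|^2/\sigma_0^2=\lambda$ and $r=n-k$, hence $\textrm{RSS}/\sigma_0^2\sim\chi_{n-k}^2(\lambda)$.

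The individual computations are short; the conceptual heart --- and the only place demanding care --- is the bookkeeping of \emph{which subspace each piece of the mis-specification occupies}. In OD and ED the deterministic mis-specification vector ${\bf Q}{\bf y}_0$ lies in $L^\perp$ while the estimation noise ${\bf P}\bepsilon_0$ lies in $L$; they add by orthogonality and produce a \emph{central} $\chi_k^2$ merely shifted by the constant $\lambda$. In FD the mis-specification vector ${\bf Q}{\bf y}_0$ and the noise ${\bf Q}\bepsilon_0$ occupy the \emph{same} subspace $L^\perp$, the cross term survives, completing the square into a genuinely \emph{non-central} $\chi_{n-k}^2(\lambda)$. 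Keeping this distinction straight, together with consistent use of the normalizing constant $d_\textrm{KL}(f;f)$ of \eqref{eq:firstself} and of the full-rank, $n>k$ hypotheses (which fix the ranks of ${\bf P}$ and ${\bf Q}$ and the uniqueness of $\bbeta_*$), is essentially the whole of the argument.
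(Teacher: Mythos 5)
Your route is essentially the paper's route: evaluate each $d_\textrm{KL}$ in closed form as a Gaussian expectation (your ``master identity'' is exactly this step), then reduce each discrepancy to a quadratic form in projections of ${\bf z}_0$ and read off its law. The FD, AD and ED computations are correct and coincide with the paper's: ${\rm RSS}/\sigma_0^2=({\bf z}_0+{\bf y}_0/\sigma_0)^t{\bf Q}({\bf z}_0+{\bf y}_0/\sigma_0)\sim\chi^2_{n-k}(\lambda)$ is genuinely non-central because ${\bf Q}{\bf y}_0$ and ${\bf Q}{\bf z}_0$ live in the same subspace, while ${\rm ED}-d_\textrm{KL}(f;f)=\frac{\sigma_0^2}{2\sigma^2}{\bf z}_0^t{\bf P}{\bf z}_0$ is a central $\chi^2_k$, and AD is deterministic.

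The one substantive issue is the OD line. You correctly derive ${\rm OD}=d_\textrm{KL}(f;f)+\frac n2\bigl(\frac{\sigma_0^2}{\sigma^2}-1\bigr)+\frac{\sigma_0^2}{2\sigma^2}\bigl(\lambda+{\bf z}_0^t{\bf P}{\bf z}_0\bigr)$, i.e.\ a \emph{central} $\chi^2_k$ shifted by the constant $\lambda$ --- your closing paragraph says precisely this --- but you then assert that ``the $\lambda$-shifted form is what appears in the OD line.'' It is not: the theorem writes $\chi^2_k(\lambda)$, which by Appendix~A is the non-central law with variance $2k+4\lambda$, whereas $\lambda+\chi^2_k$ has variance $2k$; the two laws agree only in expectation. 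The paper's own proof reduces OD to the same quadratic form $({\bf P}{\bf z}_0-{\bf Q}{\bf y}_0/\sigma_0)^t({\bf P}{\bf z}_0-{\bf Q}{\bf y}_0/\sigma_0)$ and labels it $\chi^2_k(\lambda)$ by invoking Appendix~A's first generalization ($[{\bf P}{\bf z}+{\bf u}]^t[{\bf P}{\bf z}+{\bf u}]\sim\chi^2_s({\bf u}^t{\bf u})$), which is valid only when ${\bf u}$ lies in the range of ${\bf P}$; here the shift ${\bf Q}{\bf y}_0/\sigma_0$ lies in $L^\perp$, so your shifted-central identification is the correct one. You must address this mismatch explicitly rather than claim agreement: either state that the OD line should be read with $\chi^2_k(\lambda)$ replaced by $\lambda+\chi^2_k$ (equivalently, that it is valid at the level of expectations, which is all that Theorem~\ref{thm:exps} and Corollary~\ref{cor:aic} use), or observe that your pathwise identity ${\rm OD}={\rm AD}+{\rm ED}-d_\textrm{KL}(f;f)$ is incompatible with a genuinely non-central $\chi^2_k(\lambda)$ there. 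As written, your derivation and your claimed conclusion for OD contradict each other, so the proposal does not prove the statement in the form given.
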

\begin{proof}
  Use the definition~(\ref{eq:dKL}) of~$d_\textrm{KL}$ and the definitions
  (\ref{eq:gdef}),(\ref{eq:fdef}) of~$g,f_\bbeta$.  Each integral
  over~$\mathbb{R}^n$ in the computation of a~$d_\textrm{KL}$ is a normal
  expectation that can be evaluated in closed form.  In the expressions for
  OD,FD,ED the distributions
  $\chi_k^2(\lambda),\chi_{n-k}^2(\lambda),\allowbreak \chi_k^2$ arise
  respectively as the distributions of
  \begin{subequations}
  \begin{align}
    ({\bf P}{\bf y} - {\bf y}_0)^t({\bf P}{\bf y} - {\bf y}_0)/\sigma_0^2 &=({\bf P}{\bf z}_0 - {\bf Q}{\bf y}_0/\sigma_0)^t({\bf P}{\bf z}_0 - {\bf Q}{\bf y}_0/\sigma_0),\label{eq:36a}\\
    ({\bf P}{\bf y} - {\bf y})^t({\bf P}{\bf y} - {\bf y})/\sigma_0^2 &=({\bf z}_0 + {\bf y}_0/\sigma_0)^t{\bf Q}({\bf z}_0 + {\bf y}_0/\sigma_0),\label{eq:fd}\\
    ({\bf P}{\bf y} - {\bf P}{\bf y}_0)^t({\bf P}{\bf y} - {\bf P}{\bf y}_0)/\sigma_0^2 &={\bf z}_0^t{\bf P}{\bf z}_0,
  \end{align}
  \end{subequations}
  if one uses the fact that ${\bf P},{\bf Q}$ are $n\times n$ projection
  matrices of ranks $k,n-\nobreak k$.  (For distributions of quadratic
  forms, see Appendix~A\null.)  Similarly, the~`$\lambda$' in the
  expression for~AD is the non-random value of $({\bf P}{\bf y}_0 - {\bf
    y}_0)^t({\bf P}{\bf y}_0 - {\bf y}_0)/\sigma_0^2$.
\end{proof}

\begin{mytheorem}
\label{thm:exps}
The corresponding expectations over\/ $\mathcal{M}^*$-generated data are
  \begin{alignat*}{2}
    E_{\bf y}\,{\rm OD}({\bf y})\,&=\, d_\textrm{KL}(f;f) &{}+ \frac{n}2 \left(\frac{\sigma_0^2}{\sigma^2}-1\right) &+ \frac{\sigma_0^2}{2\,\sigma^2}\,(k+\lambda),\\
    E_{\bf y}\,{\rm FD}({\bf y})\,&=\, d_\textrm{KL}(f;f) &{}+ \frac{n}2 \left(\frac{\sigma_0^2}{\sigma^2}-1\right) &+ \frac{\sigma_0^2}{2\,\sigma^2}\,(-k+\lambda),\\*[\jot]
    \hphantom{E_{\bf y}}\,{\rm AD}\hphantom{({\bf y})}\,&=\, d_\textrm{KL}(f;f) &+ \frac{n}2 \left(\frac{\sigma_0^2}{\sigma^2}-1\right) &{}+ \frac{\sigma_0^2}{2\,\sigma^2}\,\lambda,\\
    E_{\bf y}\,{\rm ED}({\bf y})\,&=\, d_\textrm{KL}(f;f) &&{}+ \frac{\sigma_0^2}{2\,\sigma^2}\,k.
\end{alignat*}
Thus in expectation only, the variable\/ ${\rm OD}-\nobreak
d_\textrm{KL}(f;f)$ is the sum of\/ ${\rm AD}-\nobreak d_\textrm{KL}(f;f)$
and the variable\/ ${\rm ED}-\nobreak d_\textrm{KL}(f;f)$.
\end{mytheorem}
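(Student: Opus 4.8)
The plan is to obtain this as an immediate consequence of Theorem~\ref{thm:31}: that theorem already exhibits OD, FD and ED as affine functions of (non-central) chi-squared variables, and AD as a deterministic quantity, so the only probabilistic fact needed is that a non-central chi-squared variable $\chi_r^2(\lambda)$ has mean $r+\lambda$, which specializes to $E\,\chi_r^2 = r$ when $\lambda = 0$ (see Appendix~A). The term $d_\textrm{KL}(f;f)$ and the term $\frac n2(\sigma_0^2/\sigma^2 - 1)$ are non-random and pass through the expectation untouched, so everything reduces to bookkeeping.

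Concretely, I would average the four distributional statements of Theorem~\ref{thm:31} in turn. For OD the only random ingredient is $\chi_k^2(\lambda)$, of mean $k+\lambda$, giving $E_{\bf y}\,{\rm OD}({\bf y}) = d_\textrm{KL}(f;f) + \frac n2(\sigma_0^2/\sigma^2 - 1) + \frac{\sigma_0^2}{2\sigma^2}(k+\lambda)$. For FD, Theorem~\ref{thm:31} gives FD distributed as $d_\textrm{KL}(f;f) - \frac n2 + \frac{\sigma_0^2}{2\sigma^2}\chi_{n-k}^2(\lambda)$; taking $E\,\chi_{n-k}^2(\lambda) = n-k+\lambda$ yields $d_\textrm{KL}(f;f) - \frac n2 + \frac{\sigma_0^2}{2\sigma^2}(n-k+\lambda)$, and the one slightly fiddly step is the purely algebraic rearrangement splitting off $\frac{\sigma_0^2}{2\sigma^2}\cdot n$ and combining it with $-\frac n2$ into $\frac n2(\sigma_0^2/\sigma^2 - 1)$, which leaves the residual $\frac{\sigma_0^2}{2\sigma^2}(-k+\lambda)$ and so reproduces the displayed form. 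AD is already recorded in Theorem~\ref{thm:31} as an equality, not a distributional statement, so it equals itself in expectation. For ED the relevant variable is a \emph{central} $\chi_k^2$ of mean $k$, giving $E_{\bf y}\,{\rm ED}({\bf y}) = d_\textrm{KL}(f;f) + \frac{\sigma_0^2}{2\sigma^2}k$.

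The closing additivity remark then follows by inspection: subtracting $d_\textrm{KL}(f;f)$ from the first, third and fourth identities, $E\,{\rm OD} - d_\textrm{KL}(f;f) = \frac n2(\sigma_0^2/\sigma^2 - 1) + \frac{\sigma_0^2}{2\sigma^2}(k+\lambda)$, while $({\rm AD} - d_\textrm{KL}(f;f)) + (E\,{\rm ED} - d_\textrm{KL}(f;f)) = \frac n2(\sigma_0^2/\sigma^2 - 1) + \frac{\sigma_0^2}{2\sigma^2}\lambda + \frac{\sigma_0^2}{2\sigma^2}k$, and the two expressions coincide. There is no genuine obstacle; the substance lies entirely in Theorem~\ref{thm:31}, and this statement is a corollary obtained by averaging. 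If a self-contained argument were wanted one could bypass Theorem~\ref{thm:31} and compute the relevant normal expectations directly --- e.g.\ $E_{\bf y}\,\textrm{RSS} = E_{\bf y}\,{\bf y}^t{\bf Q}{\bf y} = {\bf y}_0^t{\bf Q}{\bf y}_0 + \sigma_0^2\,\tr{\bf Q} = \sigma_0^2(\lambda + n - k)$, using ${\bf y} = {\bf y}_0 + \sigma_0{\bf z}_0$, $\tr{\bf Q} = n-k$ and the definition of $\lambda$, then feeding this into~(\ref{eq:FDformula}) for FD and proceeding analogously for OD and ED --- but routing through the distributional statements of Theorem~\ref{thm:31} is cleaner and keeps the accounting transparent.
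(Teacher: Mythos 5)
Your proposal is correct and matches the paper's own proof, which likewise just takes expectations of the distributional statements in Theorem~\ref{thm:31} using $E\left[\chi_r^2(\lambda)\right]=r+\lambda$ (with $\chi_r^2=\chi_r^2(0)$). The only extra detail you supply --- recombining $\frac{\sigma_0^2}{2\sigma^2}\,n$ with $-\frac n2$ into $\frac n2\bigl(\sigma_0^2/\sigma^2-1\bigr)$ for the FD line --- is exactly the bookkeeping the paper leaves implicit.
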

\begin{proof}
Use $E\left[\chi_r^2(\lambda)\right]=r+\lambda$, with $\chi_r^2$ equalling
$\chi_r^2(0)$.
\end{proof}

\begin{mycorollary}
\label{cor:aic}
  In the definition of the\/ {\rm AIC}-type model-selection criterion\/
  {\rm MSC} for model\/ $\mathcal{M}$, according to which\/ ${\rm MSC}/2$
  equals\/ ${\rm FD}({\bf y})$ plus an unbiasing term\/ $B$, the term\/ $B$
  {\rm(}i.e.\ $E_{\bf y}\,\left[{\rm OD}({\bf y}) -\nobreak {\rm FD}({\bf
      y})\right]${\rm)} equals\/ $\frac{\sigma_0^2}{2\sigma^2}$ times\/
  $2k$.  Hence
  \begin{displaymath}
    {\rm MSC} = 2\,d_{\rm KL}(f;f) + {\rm RSS}/\sigma^2 + \frac{\sigma_0^2}{\sigma^2}\,2k.
  \end{displaymath}
\end{mycorollary}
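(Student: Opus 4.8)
The plan is to read both expectations off Theorem~\ref{thm:exps} and subtract; the rest is bookkeeping. First I would write down
\begin{align*}
E_{\bf y}\,{\rm OD}({\bf y}) &= d_\textrm{KL}(f;f) + \frac{n}{2}\left(\frac{\sigma_0^2}{\sigma^2}-1\right) + \frac{\sigma_0^2}{2\sigma^2}(k+\lambda),\\
E_{\bf y}\,{\rm FD}({\bf y}) &= d_\textrm{KL}(f;f) + \frac{n}{2}\left(\frac{\sigma_0^2}{\sigma^2}-1\right) + \frac{\sigma_0^2}{2\sigma^2}(-k+\lambda),
\end{align*}
and form $B = E_{\bf y}[{\rm OD}({\bf y}) - {\rm FD}({\bf y})]$. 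The terms $d_\textrm{KL}(f;f)$, the offset $(n/2)(\sigma_0^2/\sigma^2-1)$, and---crucially---the two copies of the mis-specification parameter~$\lambda$ all cancel, leaving $B = \frac{\sigma_0^2}{2\sigma^2}\,(2k)$, which is the claimed value of the unbiasing term.

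Next I would put this into the definition ${\rm MSC}/2 = {\rm FD}({\bf y}) + B$ and replace ${\rm FD}({\bf y})$ by its closed form~(\ref{eq:FDformula}), i.e.\ $d_\textrm{KL}(f;f) - n/2 + {\rm RSS}/(2\sigma^2)$, where ${\rm RSS} = {\bf y}^t{\bf Q}{\bf y}$ and $d_\textrm{KL}(f;f)$ is given by~(\ref{eq:firstself}). Doubling and collecting terms yields, modulo a $k$-free additive constant, ${\rm MSC} = 2\,d_\textrm{KL}(f;f) + {\rm RSS}/\sigma^2 + (\sigma_0^2/\sigma^2)\,2k$, as asserted. (Specializing to $\sigma^2=\sigma_0^2$ and discarding the remaining $k$-free constant $2\,d_\textrm{KL}(f;f)$ then recovers exactly the standard AIC~(\ref{eq:AIC}), namely ${\rm RSS}/\sigma^2 + 2k$.)

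The corollary presents no real obstacle: it is an algebraic consequence of Theorem~\ref{thm:exps} together with~(\ref{eq:FDformula}). The one point worth emphasizing is \emph{why} the first step works---the overall and fitted discrepancies carry identical $\lambda$-dependence (via the noncentralities of $\chi_k^2(\lambda)$ and $\chi_{n-k}^2(\lambda)$, respectively) and the same $n$-dependent offset, so that their difference is a clean multiple of~$k$, free of the unknown~$\lambda$. It is exactly this $\lambda$-independence of~$B$ that makes the unbiasing term computable without knowing $\mathcal{M}^*$, and hence makes the MSC collapse to the AIC~(\ref{eq:AIC}) no matter how severe the mis-specification is---the conclusion promised at the start of~\S\,\ref{subsec:3a}.
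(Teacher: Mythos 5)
Your proposal is correct and follows the paper's own route exactly: the unbiasing term $B$ is read off as $E_{\bf y}[{\rm OD}-{\rm FD}]$ from Theorem~\ref{thm:exps} (with the $\lambda$-terms and the common offset cancelling), and the displayed formula then follows from~(\ref{eq:FDformula}). Your parenthetical ``modulo a $k$-free additive constant'' even handles, slightly more carefully than the paper's statement, the constant $-n$ that arises when $2\,{\rm FD}$ is written out.
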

\begin{proof}
  Compute $E_{\bf y}\,\left[{\rm OD}({\bf y}) -\nobreak {\rm FD}({\bf
      y})\right]$ from the theorem, and then use the
  formula~(\ref{eq:FDformula}) for~$\textrm{FD}({\bf y})$.
\end{proof}

Thus with the exception of a constant term equal to $2d_\textrm{KL}(f;f)$,
which does not affect the relative ranking of models, for any normal linear
model~$\mathcal{M}_\bbeta$ fitted to data the AIC-type selection criterion
reduces to the standard AIC given in~(\ref{eq:AIC}): the usual
$\textrm{RSS}/\sigma^2$, penalized by twice the number of parameters.
Provided, that~is, the model incorporates a~$\sigma^2$ equal to the
variance parameter~$\sigma_0^2$ of the true model~$\mathcal{M}^*$.  There
must be no statistical mis-specification: $\mathcal{M}$~must incorporate
error bars of the correct length.  If so, the formula~(\ref{eq:AIC}) is
exact for all~$n$.  There is no sign of any small-$n$ correction term, such
as appears in the AICc.

It must be stressed that \emph{deterministic} mis-specification is allowed
here.  There may be a~non-zero value for the mis-specification parameter
$\lambda={\bf y}_0^t{\bf Q}{\bf y}_0/\sigma_0^2$, indicating that the
constant vector~${\bf y}_0$ in the definition of the true
model~$\mathcal{M}^*$ does not lie in the estimation space, i.e., the
column space of the design matrix~${\bf X}$; so the
model~$\mathcal{M}_\bbeta$ does not agree with the true
model~$\mathcal{M}^*$ for any value of~$\bbeta$.  Because the parameter
$\lambda$~ appears in both $E_{\bf y}\textrm{OD}({\bf y})$ and $E_{\bf
  y}\textrm{FD}({\bf y})$, it cancels.

\smallskip
The preceding analysis was based entirely on the model-selection policy
of~$\S\,\ref{sec:2}$, according to which the expected overall discrepancy
$E_{\bf y}\,\textrm{OD}({\bf y})$ of the true model~$\mathcal{M}^*$ from a
candidate model~$\mathcal{M}$ should be used for selection purposes.  This
policy gives rise to the AIC, but it is interesting to consider the effects
of generalizing~it slightly.  By the formulas of Theorem~\ref{thm:exps},
this policy is equivalent to choosing the model with the smallest value of
the sum $\textrm{AD} +\nobreak E_{\bf y}\,\textrm{ED}({\bf y})$, the
discrepancy due to approximation plus the expected discrepancy due to
estimation.  Suppose that instead, one assessed the goodness of fit of
$\mathcal{M}$ to~$\mathcal{M}^*$ by employing (any multiple~of) the convex
combination
\begin{equation}
\label{eq:newAIC}
  \bigl(\tfrac1{\gamma}\bigr) \textrm{AD} + \bigl(1-\tfrac1{\gamma}\bigr)\,E_{\bf y}\,\textrm{ED}({\bf y}),
\end{equation}
where $\gamma\ge1$ is free.  By increasing~$\gamma$ one emphasizes the
discrepancy due to estimation, rather than the discrepancy
of~$\mathcal{M}^*$ from~$\mathcal{M}$ due to approximation (which if there
were no mis-specification would be a constant, i.e., would effectively be
zero).  If there is no statistical mis-specification
($\sigma^2=\sigma_0^2$), it follows from the formulas in the theorem that
an unbiased estimator of this convex combination, obtained by
unbiasing~$\textrm{FD}$, is $\textrm{MSC}_\gamma/\gamma$, where
\begin{displaymath}
  {\rm MSC}_\gamma = 2\,d_{\rm KL}(f;f) + {\rm RSS}/\sigma^2 +
  \gamma\, k.
\end{displaymath}
With its constant first term dropped, the model-selection criterion
$\textrm{MSC}_\gamma$ becomes what is widely known as the
$\mathrm{AIC}_\gamma$, which penalizes any model by $\gamma$~times its
number of parameters.  As $\gamma$~increases, over-fitting is increasingly
disfavored.

The conceptual difference between the two sorts of error in statistical
model fitting was pointed~out by \myciteasnoun{Inagaki77}, and an
$\textrm{AIC}_\gamma$-like criterion resembling~(\ref{eq:newAIC}) was
defined for autoregressive models by \myciteasnoun[Eq.~(2.12)]{Bhansali86}.
But it seems not have been noticed that for normal linear models,
$\mathrm{AIC}_\gamma$ arises rather naturally.  Of~course in applications,
domain-specific considerations that are less axiomatic than practical may
affect the choice of~$\gamma$.

\subsection{AIC variability and a significance test}
\label{subsec:3b}

The procedure of deciding among candidate normal linear models will now be
placed in the classical hypothesis testing framework.  A test of
significance for the evidence that $\mathcal{M}^1,\mathcal{M}^2$ are not
equally close to the true data-generating process~$\mathcal{M}^*$ will be
proposed.  The test is valid in the $n\to\infty$ limit, when applied to
models that in a certain precise sense, are separately mis-specified.  It
is assumed that the data points are accompanied by error bars, i.e., that
the residual variance $\sigma_0^2$ is known and is incorporated
in~$\mathcal{M}^1,\mathcal{M}^2$.

The proposed test is based on the statistic $\Delta^{12}\defeq\allowbreak
\textrm{AIC}^2 -\nobreak \textrm{AIC}^1$ and an expression for its
variance, and can potentially replace the traditional use of Akaike
weights.  Focusing on the variability and hence the significance of an AIC
difference has much in common with the approaches of \myciteasnoun{Efron84}
and \myciteasnoun{Fraser87}.  But unlike \citeauthor{Efron84} we do not use
a bootstrap procedure, and unlike \citeauthor{Fraser87} we allow arbitrary
mis-specification.  The general approach is distinguished from the
likelihood ratio testing approach originating with \myciteasnoun{Cox62}, in
that it decides between $\mathcal{M}^1,\mathcal{M}^2$ on the basis of which
is \emph{closer to the truth}, not on the basis of which is \emph{more
  likely to be correct}.

For regression applications, consider the case when the models are fitted
to a random vector ${\bf y}\in\mathbb{R}^n$ that is generated by an unknown
true model ${\bf y}={\bf y}_0 +\nobreak \bepsilon_0$, and $N=1$: only one
observation of the random vector~${\bf y}$ is available as a datum.  The
candidates $\mathcal{M}^l$, $l=1,2$, are defined by ${\bf
  y}^{(l)}=\allowbreak {\bf X}^{(l)}\bbeta^{(l)}+\nobreak \bepsilon^{(l)}$,
where ${\bf X}^{(l)}$ is an $n\times\nobreak k_l$ design matrix of full
rank (with $k_l<n$) and $\bbeta^{(l)}\in\mathbb{R}^{k_l}$ is a column
vector of parameters.  The estimation space $L_l\subset\mathbb{R}^n$ (i.e.,
the column space of~${\bf X}^{(l)}$) is a linear subspace of
dimension~$k_l$.  Since the models are given, the subspaces $L_1,L_2$ are
specified in advance; thus the analysis below is in a sense conditional.
The error vectors $\bepsilon,\bepsilon^{(1)},\bepsilon^{(2)}$ are taken to
be $\sigma_0{\bf z}_0,\sigma_0{\bf z}^{(1)},\sigma_0{\bf z}^{(2)}$, where
${\bf z}_0,{\bf z}^{(1)},{\bf z}^{(2)}$ are vectors of $n$~independent
standard normal variables.

In this setting the $d_\textrm{KL}$-based MSC of AIC type reduces to the
standard AIC, by Corollary~\ref{cor:aic}.  Dropping the additive constant
$d(f;f)$, we write
\begin{subequations}
\label{eq:aicquadform}
\begin{align}
\textrm{AIC}^{l} &=\textrm{RSS}^{l}/\sigma_0^2 + \nobreak 2\,k_l,\label{eq:aicquadforma}\\
  \textrm{RSS}^{l}/\sigma_0^2 &= \bigl[{\bf y}-\hat{\bf y}^{(l)}\bigr]^t
  \bigl[{\bf y}-\hat{\bf y}^{(l)}\bigr]/\sigma_0^2 \label{eq:aicquadformb}\\
  &= {\bf y}^t{\bf Q}^{(l)}{\bf y}/\sigma_0^2 
  \,=\,({\bf z}_0+{\bf y}_0/\sigma_0)^t{\bf Q}^{(l)}({\bf z}_0+{\bf y}_0/\sigma_0),\nonumber
\end{align}
\end{subequations}
since $\hat{\bf y}\defeq {\bf P}^{(l)}{\bf y}$.  (Cf.~(\ref{eq:fd})).  The
$n\times n$ matrices ${\bf P}^{(l)},{\bf Q}^{(l)}$ project onto
$L_l,L_l^\perp\subset\nobreak\mathbb{R}^n$, with ${\bf
  Q}^{(l)}=\allowbreak{\bf I}_n-\nobreak {\bf P}^{(l)}$; note that $\tr
{\bf P}^{(l)}=k_l$ and $\tr {\bf Q}^{(l)}=\allowbreak n-\nobreak k_l$.

Being an inhomogeneous quadratic form in~${\bf z}_0$,
$\textrm{RSS}^{l}/\sigma_0^2$ has a non-central chi-squared distribution,
which is $\chi_{n-k_l}^2(\lambda^{(l)})$, where $\lambda^{(l)}\defeq {\bf
  y}_0^t{\bf Q}^{(l)}{\bf y}_0/\sigma_0^2$ characterizes the
mis-specification of model~$\mathcal{M}^l$ against~$\mathcal{M}^*$.
$\textrm{AIC}^{l}$ is therefore distributed by
\begin{equation}
\label{eq:aicdistrib}
  \textrm{AIC}^{l} \sim \chi_{n-k_l}^2(\lambda^{(l)}) + 2\,k_l.
\end{equation}
It should be noted that as $r\to\infty$, the distribution of
$\chi_r^2(\lambda)$ is increasingly normal, whether or not $\lambda$~grows
with~$r$; thus as~$n\to\infty$, the distribution of $\textrm{AIC}^{l}$ is
increasingly normal.  But it is the distribution of
$\Delta^{12}\defeq\allowbreak \textrm{AIC}^2 -\nobreak \textrm{AIC}^1$ that
is of interest in model selection, and this is determined by the joint
distribution of $\textrm{AIC}^1,\textrm{AIC}^2$, and hence by the joint
distribution of the quadratic forms ${\bf y}^t{\bf Q}^{(1)}{\bf y}$ and
${\bf y}^t{\bf Q}^{(2)}{\bf y}$.  As will be seen, obtaining an
$n\to\infty$ limit theorem requires that the relationship between~${\bf
  Q}^{(1)},{\bf Q}^{(2)}$ be somewhat restricted.

\begin{mytheorem}
\label{thm:expsandvars}
  The expectation and variance of\/ ${\rm AIC}^1,{\rm AIC}^2$ and the
  difference\/ $\Delta^{12}\defeq\allowbreak {\rm AIC}^2 -\nobreak {\rm
    AIC}^1$ are given by
  \begin{sizealignat}{\small}{3}
    E\,{\rm AIC}^{l}&=&& \tr{\bf Q}^{(l)} + {\bf y}_0^t{\bf   Q}^{(l)}{\bf y}_0/\sigma_0^2 + 2\,k_l&=&\ (n - k_l) + \lambda^{(l)} + 2\,k_l, \nonumber\\
    \Var {\rm AIC}^{l} &=&& 2\tr{\bf Q}^{(l)} + 4\,{\bf y}_0^t{\bf
      Q}^{(l)}{\bf y}_0/\sigma_0^2 &=&\ 2(n-k_l)+4\,\lambda^{(l)},\nonumber\\*
    E\,\Delta^{12}&=&& \tr({\bf Q}^{(2)}-{\bf Q}^{(1)}) + {\bf y}_0^t({\bf  Q}^{(2)}-{\bf  Q}^{(1)}){\bf y}_0/\sigma_0^2&=&\ (k_2-k_1)+(\lambda^{(2)}-\lambda^{(1)}),\nonumber\\
    \Var\Delta^{12}&=&& 2\tr\bigl[({\bf Q}^{(2)}-{\bf Q}^{(1)})^2\bigr] + 4\,{\bf y}_0^t\bigl[({\bf Q}^{(2)}-{\bf  Q}^{(1)})^2\bigr]{\bf y}_0/\sigma_0^2.&&\nonumber
  \end{sizealignat}
\end{mytheorem}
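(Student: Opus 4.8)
The plan is to reduce every one of the four quantities to a moment of an inhomogeneous quadratic form in a single Gaussian vector, and then apply the textbook formulas recalled in Appendix~A\null. First I would set ${\bf w}\defeq{\bf z}_0+{\bf y}_0/\sigma_0$, so that ${\bf w}\sim N({\boldsymbol\mu},{\bf I}_n)$ with ${\boldsymbol\mu}\defeq{\bf y}_0/\sigma_0$; by~(\ref{eq:aicquadform}) one then has $\textrm{RSS}^{l}/\sigma_0^2={\bf w}^t{\bf Q}^{(l)}{\bf w}$, hence $\textrm{AIC}^{l}={\bf w}^t{\bf Q}^{(l)}{\bf w}+2k_l$ and $\Delta^{12}={\bf w}^t({\bf Q}^{(2)}-{\bf Q}^{(1)}){\bf w}+2(k_2-k_1)$. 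Each of these is a quadratic form in~${\bf w}$ plus a constant, so the identities $E[{\bf w}^t{\bf A}{\bf w}]=\tr{\bf A}+{\boldsymbol\mu}^t{\bf A}{\boldsymbol\mu}$ and $\Var[{\bf w}^t{\bf A}{\bf w}]=2\tr({\bf A}^2)+4\,{\boldsymbol\mu}^t{\bf A}^2{\boldsymbol\mu}$, valid for symmetric~${\bf A}$, apply directly, and an added constant merely shifts the mean while leaving the variance unchanged.

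For $\textrm{AIC}^{l}$ I would take ${\bf A}={\bf Q}^{(l)}$. Because ${\bf Q}^{(l)}$ is an orthogonal projection it is idempotent, so $({\bf Q}^{(l)})^2={\bf Q}^{(l)}$ and $\tr{\bf Q}^{(l)}=n-k_l$; substituting and recognizing ${\boldsymbol\mu}^t{\bf Q}^{(l)}{\boldsymbol\mu}={\bf y}_0^t{\bf Q}^{(l)}{\bf y}_0/\sigma_0^2=\lambda^{(l)}$ gives the first two displayed lines at once. Equivalently, these follow from the distribution~(\ref{eq:aicdistrib}) together with the known mean $r+\lambda$ and variance $2r+4\lambda$ of a $\chi_r^2(\lambda)$ variable.

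For the difference I would take ${\bf A}={\bf Q}^{(2)}-{\bf Q}^{(1)}$, noting that although it is symmetric it is \emph{not} idempotent. Its trace is $(n-k_2)-(n-k_1)=k_1-k_2$, so together with the constant $2(k_2-k_1)$ the mean formula yields $E\,\Delta^{12}=(k_1-k_2)+(\lambda^{(2)}-\lambda^{(1)})+2(k_2-k_1)=(k_2-k_1)+(\lambda^{(2)}-\lambda^{(1)})$, the third line; and since the constant is immaterial for the variance, the variance formula yields $\Var\Delta^{12}=2\tr[({\bf Q}^{(2)}-{\bf Q}^{(1)})^2]+4\,{\bf y}_0^t[({\bf Q}^{(2)}-{\bf Q}^{(1)})^2]{\bf y}_0/\sigma_0^2$, the fourth line. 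A slightly longer route would go through $\Var\Delta^{12}=\Var\textrm{AIC}^1+\Var\textrm{AIC}^2-2\,\textrm{Cov}(\textrm{AIC}^1,\textrm{AIC}^2)$ with $\textrm{Cov}({\bf w}^t{\bf Q}^{(1)}{\bf w},{\bf w}^t{\bf Q}^{(2)}{\bf w})=2\tr({\bf Q}^{(1)}{\bf Q}^{(2)})+4\,{\boldsymbol\mu}^t{\bf Q}^{(1)}{\bf Q}^{(2)}{\boldsymbol\mu}$, but the direct computation is cleaner.

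This argument is short and has no real obstacle; the one point demanding care is precisely the one just flagged, namely that ${\bf Q}^{(2)}-{\bf Q}^{(1)}$ must not be mistaken for a projection, so $({\bf Q}^{(2)}-{\bf Q}^{(1)})^2$ cannot be collapsed to ${\bf Q}^{(2)}-{\bf Q}^{(1)}$ without extra hypotheses on how the subspaces $L_1,L_2$ intersect. That irreducible ${\bf A}^2$ is exactly what will later force a structural restriction on ${\bf Q}^{(1)},{\bf Q}^{(2)}$ when an $n\to\infty$ normal limit for $\Delta^{12}$ is sought. It is also worth stressing that only Gaussianity of~${\bf z}_0$ with identity covariance is used, so nothing anywhere in the proof appeals to correct specification ($\lambda^{(l)}=0$).
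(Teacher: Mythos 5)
Your proposal is correct and follows essentially the same route as the paper: the paper likewise obtains the single-model lines from the noncentral chi-squared law (\ref{eq:aicdistrib}) with $E[\chi_r^2(\lambda)]=r+\lambda$, $\Var[\chi_r^2(\lambda)]=2r+4\lambda$, and all four lines from the quadratic-form representation (\ref{eq:aicquadform}) via the standard Gaussian moment identities you quote, including the observation that ${\bf Q}^{(2)}-{\bf Q}^{(1)}$ is symmetric but not idempotent. Your closed form $(k_2-k_1)+(\lambda^{(2)}-\lambda^{(1)})$ for $E\,\Delta^{12}$ agrees with the theorem's right-hand expression, so nothing further is needed.
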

\begin{proof}
  The first three of these follow immediately from~(\ref{eq:aicdistrib}) by
  using $E\left[\chi_r^2(\lambda)\right]=\allowbreak r+\nobreak\lambda$ and
  $\Var\left[\chi_r^2(\lambda)\right]=\allowbreak 2r+\nobreak 4\lambda$.
  All four follow from~(\ref{eq:aicquadform}) by using the known
  expressions for normal moments (i.e., the moments of the components of
  the normal random vector ${\bf y}\in\mathbb{R}^n)$.
\end{proof}

The joint distribution of a pair of quadratic forms in a normal vector such
as ${\bf y}\in\mathbb{R}^n$ is complicated, and in~general can only be
expressed in~terms of special functions \mycite{Mathai92}.  But some cases
can be treated in closed form.  For instance, ${\bf y}^t{\bf Q}^{(1)}{\bf
  y}$ and ${\bf y}^t{\bf Q}^{(2)}{\bf y}$ are independent if (and only~if)
${{\bf Q}^{(1)}{\bf Q}^{(2)}={\bf0}}$, by the Craig--Sakamoto theorem
\mycite{Ogawa2008}.  A~case more important in applications is the
following.  Suppose that normal linear regression models
$\mathcal{M}^1,\mathcal{M}^2$, such as the pair considered here, satisfy
$\mathcal{M}^2\subset\mathcal{M}^1$.  That~is, $\mathcal{M}^2$~is a reduced
version of the fuller model~$\mathcal{M}^1$, obtained by parametric
restriction.  Then they are nested: their estimation subspaces $L_1,L_2$
are related by $L_2\subset L_1$ and $L_2^\perp\supset L_1^\perp$, so that
${\bf Q}^{(1)}{\bf Q}^{(2)}={\bf Q}^{(1)}$.  If the vector~${\bf y}_0$ in
the true model~$\mathcal{M}^*$ satisfies ${\bf y}_0\in L_2\subset L_1$, so
that neither of $\mathcal{M}^1,\mathcal{M}^2$ is mis-specified and
$\lambda^{(1)}=\allowbreak\lambda^{(2)}=\nobreak0$, then in addition to the
distributional statement $\textrm{RSS}^l/\sigma_0^2\sim \chi^2_{n-k_l}$,
one has $(\textrm{RSS}^2 -\nobreak
\textrm{RSS}^1)/\sigma_0^2\sim\allowbreak \chi_{k_1-k_2}^2$.  If
alternatively ${\bf y}_0\in L_1\setminus L_2$, so that $\mathcal{M}^2$~is
mis-specified but the fuller model~$\mathcal{M}^1$ is not, then
\begin{equation}
\label{eq:chi2appearance}
  \Delta^{12} + 2(k_1-k_2) = 
  (\textrm{RSS}^2 -\nobreak
\textrm{RSS}^1)/\sigma_0^2\sim\allowbreak \chi_{k_1-k_2}^2(\lambda^{(2)}).
\end{equation}
Such situations are familiar from multivariate regression, and lead to
(partial) F\nobreakdash-tests of the significance of linear regressors
\mycite{Mardia79}.  However, we wish also to handle ${\bf Q}^{(1)},{\bf
  Q}^{(2)}$ or equivalently $L_1,L_2$ that are less closely related:
non-nestedness and more general mis-specifications should be allowed.

To motivate the proposed hypothesis test a simple limit theorem will now be
proved, on the distribution of~$\Delta^{12}$ in a case often encountered in
the physical sciences.  This is when the models
$\mathcal{M}^1,\mathcal{M}^2$ are at~least slightly mis-specified relative
to the (unknown, presumably infinite-dimensional) true
model~$\mathcal{M}^*$, in the rather consequential sense that each has a
non-zero mean fitting error per data point.  Hence one expects that in the
$n\to\infty$ limit, the mis-specification parameters $\lambda^{(l)}\defeq
{\bf y}_0^t{\bf Q}^{(l)}{\bf y}_0/\sigma_0^2$ will grow proportionately
to~$n$ (generically, at different rates).  For further discussion of
mis-specification regimes, see~\S\,\ref{sec:4}.

In the theorem a certain trace condition will appear as a hypothesis.  It
is motivated by the following consideration.  From $\tr {\bf
  Q}^{(l)}=\allowbreak n-\nobreak k_l$ it follows that $\tr ({\bf
  Q}^{(2)}-\nobreak{\bf Q}^{(1)})=\allowbreak k_1-\nobreak k_2$.  If there
is nestedness and $L_2\subset L_1$, then ${\bf Q}^{(2)}-\nobreak{\bf
  Q}^{(1)}$ is also a projection, and idempotent; thus for any specified
${m\ge1}$, $\tr \left[({\bf Q}^{(2)}-\nobreak{\bf Q}^{(1)})^m\right]$ is
$O(1)$, i.e.\ it does not grow with~$n$.  It is reasonable to suppose that
this condition will hold if $\mathcal{M}^1,\mathcal{M}^2$, even if
non-nested, are sufficiently similar to justify their being used as
competing models of the same data.  (Note that if the condition holds for
$m=2$ then it holds for all $m\ge2$, by a standard trace norm inequality.)
The condition does not hold in the maximally dissimilar case ${{\bf
    Q}^{(1)}{\bf Q}^{(2)}={\bf0}}$, as $\tr \left[({\bf
    Q}^{(2)}-\nobreak{\bf Q}^{(1)})^2\right]$ then equals $\tr{\bf Q}^{(1)}
+\nobreak \tr{\bf Q}^{(2)}$.

\begin{mydefinition}
  Consider a sequence of triples
  $(\mathcal{M}^1,\mathcal{M}^2,\mathcal{M}^*)$ indexed by~$n$ (including
  sequences of vectors ${\bf y}_0\in\mathbb{R}^n$ and design matrices),
  with a common error variance~$\sigma_0^2$.  The $n\times n$ projections
  ${\bf Q}^{(1)},{\bf Q}^{(2)}$ are defined as usual.  If
  as~${n\to\infty}$, ${\bf y}_0^t\left[({\bf Q}^{(2)}-\nobreak{\bf
      Q}^{(1)})^2\right]{\bf y}_0$ is bounded below by a positive multiple
  of~$n$, while (much more routinely) the mis-specification parameters
  $\lambda^{(l)}= {\bf y}_0^t{\bf Q}^{(l)}{\bf y}_0/\sigma_0^2$ of the two
  models are bounded above by a positive multiple of~$n$, the models are
  said to be \emph{asymptotically separately mis-specified}.
\end{mydefinition}

\begin{remark*}
  Nested models $\mathcal{M}^1,\mathcal{M}^2$ will be asymptotically
  separately mis-specified if
  $\lambda^{(1)},\lambda^{(2)},\left|\lambda^{(1)}-\lambda^{(2)}\right|$
  all grow proportionately to~$n$.
\end{remark*}

\begin{mytheorem}
  In this setting of a sequence of triples\/
  $(\mathcal{M}^1,\mathcal{M}^2,\mathcal{M}^*)$ indexed by\/ $n$, if the
  candidate models are asymptotically separately mis-specified, and also
  satisfy the trace condition that\/ $\tr \left[({\bf Q}^{(2)}-\nobreak{\bf
      Q}^{(1)})^2\right]$ is $O(1)$, then the distribution of\/
  $\Delta^{12}\defeq\allowbreak {\rm AIC}^2 -\nobreak {\rm AIC}^1$, the
  expectation and variance of which are given in
  Theorem\/~{\rm\ref{thm:expsandvars}}, is asymptotically normal as\/
  $n\to\infty$.
\end{mytheorem}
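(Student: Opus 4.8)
The plan is to strip $\Delta^{12}$ down to a quadratic-plus-linear form in the standard normal vector $\mathbf{z}_0$ and to show that exactly one term — the linear one — carries the diverging variance. Write $\mathbf{A}\defeq\mathbf{Q}^{(2)}-\mathbf{Q}^{(1)}$ (symmetric, with all eigenvalues in $[-1,1]$) and $\mathbf{w}\defeq\mathbf{y}_0/\sigma_0$. By~(\ref{eq:aicquadform}),
\begin{align*}
  \Delta^{12} &= 2(k_2-k_1) + (\mathbf{z}_0+\mathbf{w})^t\mathbf{A}(\mathbf{z}_0+\mathbf{w}) \\
  &= \bigl[2(k_2-k_1) + \mathbf{w}^t\mathbf{A}\mathbf{w}\bigr] + \mathbf{z}_0^t\mathbf{A}\mathbf{z}_0 + 2(\mathbf{A}\mathbf{w})^t\mathbf{z}_0,
\end{align*}
the bracketed quantity being deterministic. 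Thus the centred statistic is $\Delta^{12}-E\,\Delta^{12} = \bigl(\mathbf{z}_0^t\mathbf{A}\mathbf{z}_0-\tr\mathbf{A}\bigr) + 2(\mathbf{A}\mathbf{w})^t\mathbf{z}_0$, and I would analyse these two summands separately after dividing by $\sqrt{\Var\Delta^{12}} = \sqrt{2\tr(\mathbf{A}^2)+4\|\mathbf{A}\mathbf{w}\|^2}$ (the cross-covariance of a Gaussian quadratic and a Gaussian linear form vanishes, consistently with Theorem~\ref{thm:expsandvars}).

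The linear term is \emph{exactly} Gaussian: $2(\mathbf{A}\mathbf{w})^t\mathbf{z}_0\sim N\bigl(0,\,4\|\mathbf{A}\mathbf{w}\|^2\bigr)$, a fixed linear functional of $\mathbf{z}_0\sim N(\mathbf{0},\mathbf{I}_n)$, with $\|\mathbf{A}\mathbf{w}\|^2=\mathbf{y}_0^t(\mathbf{Q}^{(2)}-\mathbf{Q}^{(1)})^2\mathbf{y}_0/\sigma_0^2$; the hypothesis of asymptotic separate mis-specification is precisely the statement that $\|\mathbf{A}\mathbf{w}\|^2\ge c\,n$ for some $c>0$, so this term's variance grows like~$n$. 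The quadratic remainder $\mathbf{z}_0^t\mathbf{A}\mathbf{z}_0-\tr\mathbf{A}$ has mean zero and variance $2\tr(\mathbf{A}^2)$, which the trace condition pins at $O(1)$. Hence, after standardisation, the quadratic piece has variance $2\tr(\mathbf{A}^2)/\Var\Delta^{12}=O(1/n)\to0$, so it tends to $0$ in $L^2$ and thus in probability; while the standardised linear piece is $N\bigl(0,\,4\|\mathbf{A}\mathbf{w}\|^2/\Var\Delta^{12}\bigr)$, a normal law whose variance tends to $1$ because $\tr(\mathbf{A}^2)/\|\mathbf{A}\mathbf{w}\|^2\to0$. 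Slutsky's theorem then yields $\bigl(\Delta^{12}-E\,\Delta^{12}\bigr)/\sqrt{\Var\Delta^{12}}\Rightarrow N(0,1)$, which is the claim; the moments appearing are those of Theorem~\ref{thm:expsandvars}, and the dependence between the two summands is irrelevant for Slutsky.

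The point needing care — rather than any single estimate — is the recognition that here asymptotic normality is forced by the \emph{linear cross term} arising from deterministic mis-specification, not by the quadratic form: under the trace condition $\mathbf{z}_0^t\mathbf{A}\mathbf{z}_0$ has bounded variance and fails the ratio condition $\tr(\mathbf{A}^4)/(\tr\mathbf{A}^2)^2\to0$ that a quadratic-form central limit theorem would need, and in the well-specified case $\mathbf{A}\mathbf{w}=\mathbf{0}$ (for instance the nested situation of~(\ref{eq:chi2appearance}) with $\lambda^{(2)}=0$) the conclusion genuinely fails, $\Delta^{12}$ being a shifted $\chi^2$ with $O(1)$ degrees of freedom. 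So the lower bound $\|\mathbf{A}\mathbf{w}\|^2\gtrsim n$ must be used in an essential way; the upper bounds $\lambda^{(l)}\lesssim n$ in the definition (with $(\mathbf{Q}^{(2)}-\mathbf{Q}^{(1)})^2\preceq 2(\mathbf{Q}^{(1)}+\mathbf{Q}^{(2)})$ giving $\|\mathbf{A}\mathbf{w}\|^2\lesssim n$ as well) serve only to confirm that $\Var\Delta^{12}$ is of exact order~$n$, so that the limiting normal law is non-degenerate after centring and scaling.
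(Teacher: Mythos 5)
Your proof is correct, but it reaches the conclusion by a genuinely different route than the paper. The paper's own argument is a cumulant (method-of-moments) argument: writing ${\bf A}={\bf Q}^{(2)}-{\bf Q}^{(1)}$, each higher cumulant $c_m[\Delta^{12}]$, $m\ge3$, is a combination of $\tr({\bf A}^m)$, which is $O(1)$ by the trace condition, and ${\bf y}_0^t{\bf A}^m{\bf y}_0$, which is $O(n)$; since $\Var\Delta^{12}$ is bounded below by a positive multiple of $n$, the standardized cumulants of order $m\ge3$ vanish in the limit, so the moments of $(\Delta^{12}-E\,\Delta^{12})/[\Var\Delta^{12}]^{1/2}$ tend to those of $N(0,1)$. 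You instead decompose the centred statistic into the exactly Gaussian linear form $2({\bf A}{\bf w})^t{\bf z}_0$ plus the centred pure quadratic ${\bf z}_0^t{\bf A}{\bf z}_0-\tr{\bf A}$, note that their covariance vanishes, show the quadratic piece is asymptotically negligible after standardization (its variance $2\tr({\bf A}^2)$ is $O(1)$ against a normalizer of order $n$), and finish with Slutsky; all the individual steps (the variance identity matching Theorem~\ref{thm:expsandvars}, the Chebyshev bound, the convergence of the normal variances to~$1$, and the fact that Slutsky needs no independence) are sound. Both arguments exploit the same two structural facts — the trace condition suppresses the quadratic fluctuation, while the separate-mis-specification lower bound makes ${\bf y}_0^t{\bf A}^2{\bf y}_0$ grow like $n$ — but your route is more elementary and more diagnostic: it avoids any appeal to moment determinacy of the normal law, it isolates the linear cross term arising from deterministic mis-specification as the actual source of asymptotic normality (so the failure in the well-specified nested case, where $\Delta^{12}$ is a shifted chi-squared with $O(1)$ degrees of freedom, is transparent), and it shows that only the lower bound in the definition is needed for normality, the upper bounds on $\lambda^{(l)}$ serving merely to pin $\Var\Delta^{12}$ at exact order $n$. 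The paper's cumulant route, in exchange, treats the linear and quadratic contributions uniformly and would extend without change to regimes in which $\tr({\bf A}^2)$ also grows with $n$, provided the standardized higher cumulants still tend to zero.
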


\begin{remark*}
  Under the conditions of this theorem, $\Var\Delta^{12}$ will be bounded
  below by a positive multiple of~$n$.  This is because according to
  Theorem~\ref{thm:expsandvars}, $\Var\Delta^{12}$ equals a combination of
  $\tr\bigl[({\bf Q}^{(2)}-{\bf Q}^{(1)})^2\bigr]$ and ${\bf
    y}_0^t\bigl[({\bf Q}^{(2)}-{\bf Q}^{(1)})^2\bigr]{\bf y}_0$.
\end{remark*}

\begin{proof}
  The second cumulant $c_2\left[\Delta^{12}\right]= \Var\Delta^{12}$ is
  bounded below by a positive multiple of~$n$, as just remarked.  It is
  easily seen that each higher cumulant $c_m\left[\Delta^{12}\right]$,
  $m\ge3$, is~$O(n)$.  For instance, $c_3\left[\Delta^{12}\right]$ equals a
  combination of $\tr\bigl[({\bf Q}^{(2)}-\nobreak{\bf Q}^{(1)})^3\bigr]$
  and ${\bf y}_0^t\bigl[({\bf Q}^{(2)}-\nobreak{\bf Q}^{(1)})^3\bigr]{\bf
    y}_0$, and these are respectively $O(1)$ and~$O(n)$.  Therefore the
  moments of $\left(\Delta^{12} -
  E\,\Delta^{12}\right)/\left[\Var\Delta^{12}\right]^{1/2}$ tend to those
  of $N(0,1)$, because its higher cumulants tend to zero.
\end{proof}

The hypothesis test is suggested by the following.  Recall that up to an
unimportant additive constant, $\textrm{AIC}/2$ is an unbiased estimator of
the expected overall discrepancy $E_{\bf y}\textrm{OD}^l({\bf y})$,
i.e.\ the negative of the expected log-likelihood after fitting, the
expectation being over data generated by~$\mathcal{M}^*$.

\begin{mycorollary}
  In the above setting, under the null hypothesis that\/ $E_{\bf y}{\rm
    OD}^1({\bf y})=E_{\bf y}{\rm OD}^2({\bf y})$ for all\/ $n$, i.e.,
  that\/ $\mathcal{M}^1,\mathcal{M}^2$ are equally discrepant from\/
  $\mathcal{M}^*$ for all\/ $n$, the distribution of
  \begin{displaymath}
    \Delta^{12} \Bigm/ 
\Bigl\{
2\tr\bigl[({\bf Q}^{(2)}-{\bf Q}^{(1)})^2\bigr] + 4\,{\bf y}_0^t\bigl[({\bf Q}^{(2)}-{\bf  Q}^{(1)})^2\bigr]{\bf y}_0/\sigma_0^2
\Bigr\}^{1/2}
  \end{displaymath}
tends to $N(0,1)$ as~$n\to\infty$.
\end{mycorollary}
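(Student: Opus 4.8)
The plan is to read the corollary off the limit theorem just proved, with only one short additional observation. That theorem already supplies the work: under the standing hypotheses here — asymptotic separate mis-specification together with the trace condition that $\tr\bigl[({\bf Q}^{(2)}-{\bf Q}^{(1)})^2\bigr]$ is $O(1)$ — the standardized variable $\left(\Delta^{12}-E\,\Delta^{12}\right)/\left[\Var\Delta^{12}\right]^{1/2}$ tends to $N(0,1)$ as $n\to\infty$, where $E\,\Delta^{12}$ and $\Var\Delta^{12}$ are the explicit quantities of Theorem~\ref{thm:expsandvars} (and $\Var\Delta^{12}$ is bounded below by a positive multiple of $n$, as noted in the remark before that theorem's proof, so the standardization is legitimate). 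Hence the only thing left to check is that the null hypothesis forces $E\,\Delta^{12}=0$ for every $n$; the displayed quotient is then literally $\Delta^{12}/\left[\Var\Delta^{12}\right]^{1/2}$, and the conclusion follows.

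First I would translate the null into a statement about means. By Corollary~\ref{cor:aic} and the discussion following it, with the common additive constant $d_\textrm{KL}(f;f)$ retained one has $\textrm{MSC}^l/2 = d_\textrm{KL}(f;f)+\textrm{AIC}^l/2$, and this is by construction (Definition~\ref{def:2}) an unbiased estimator of the expected overall discrepancy $E_{\bf y}\,\textrm{OD}^l({\bf y})$. Since $\mathcal{M}^1$ and $\mathcal{M}^2$ both incorporate the same error variance $\sigma_0^2$, the constant $d_\textrm{KL}(f;f)=\frac{n}{2}\bigl[1+\ln(2\pi)\bigr]+\frac{n}{2}\ln\bigl(\sigma_0^2\bigr)$ is identical for $l=1,2$ and so cancels in $\Delta^{12}=\textrm{AIC}^2-\textrm{AIC}^1$; taking expectations gives $E\,\Delta^{12}=2\bigl(E_{\bf y}\textrm{OD}^2({\bf y})-E_{\bf y}\textrm{OD}^1({\bf y})\bigr)$, which the null annihilates identically in $n$. (As a sanity check one may instead combine Theorems~\ref{thm:exps} and~\ref{thm:expsandvars}: with $\sigma^2=\sigma_0^2$ the null $E_{\bf y}\textrm{OD}^1=E_{\bf y}\textrm{OD}^2$ is exactly $k_1+\lambda^{(1)}=k_2+\lambda^{(2)}$, whence $E\,\Delta^{12}=(k_2-k_1)+(\lambda^{(2)}-\lambda^{(1)})=0$.)

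With $E\,\Delta^{12}=0$ for all $n$, the limit theorem reduces under the null to $\Delta^{12}/\left[\Var\Delta^{12}\right]^{1/2}\to N(0,1)$; and since by Theorem~\ref{thm:expsandvars} the denominator $\left[\Var\Delta^{12}\right]^{1/2}$ equals $\Bigl\{2\tr\bigl[({\bf Q}^{(2)}-{\bf Q}^{(1)})^2\bigr]+4\,{\bf y}_0^t\bigl[({\bf Q}^{(2)}-{\bf Q}^{(1)})^2\bigr]{\bf y}_0/\sigma_0^2\Bigr\}^{1/2}$, this is precisely the statement to be proved.

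I do not anticipate a genuine obstacle: the analytic content — the cumulant estimates giving asymptotic normality, and the lower bound on $\Var\Delta^{12}$ — lives entirely in the preceding theorem. The two points that merely deserve care in the write-up are: (i) the cancellation of $d_\textrm{KL}(f;f)$, which depends on both candidate models carrying the same $\sigma_0^2$, so that $\textrm{AIC}^l/2$ estimates $E_{\bf y}\textrm{OD}^l$ up to a model-independent additive constant; and (ii) the ``for all $n$'' phrasing of the null, which is exactly what makes the centering vanish along the whole sequence rather than only in the limit — a merely asymptotic null would yield the same conclusion but would require an additional $o\bigl(\sqrt{n}\bigr)$ bound on $E\,\Delta^{12}$.
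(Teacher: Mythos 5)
Your proposal is correct and follows essentially the same route as the paper: the paper's own proof is a one-line observation that the displayed denominator is $\left[\Var\Delta^{12}\right]^{1/2}$ from Theorem~\ref{thm:expsandvars}, with the asymptotic normality inherited from the preceding limit theorem and the vanishing of $E\,\Delta^{12}$ under the null left implicit. Your additional verification that the null forces $E\,\Delta^{12}=0$ for every $n$ (via the unbiasedness of $\mathrm{MSC}/2$ with the common constant $d_\textrm{KL}(f;f)$ cancelling, or equivalently $k_1+\lambda^{(1)}=k_2+\lambda^{(2)}$) is exactly the detail the paper suppresses, and it is carried out correctly.
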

\begin{proof}
  The denominator is $\left[\Var\Delta^{12}\right]^{1/2}$, as given in
  Theorem~\ref{thm:expsandvars}.
\end{proof}

An unbiased estimator of $\Var\Delta^{12}$ is the quantity
\begin{equation}
\label{eq:varest}
  -2\tr\bigl[({\bf Q}^{(2)}-{\bf Q}^{(1)})^2\bigr] + 4\,{\bf y}^t\bigl[({\bf Q}^{(2)}-{\bf  Q}^{(1)})^2\bigr]{\bf y}/\sigma_0^2
\end{equation}
as follows by evaluating its expectation over~${\bf y}$.  It is not
guaranteed to be positive, but the probability of its being so tends to
unity as~$n\to\infty$, since the second term increasingly dominates the
first.  (A~maximum likelihood estimator could perhaps be used instead, but
even when $\mathcal{M}^1,\mathcal{M}^2$ are nested and $\Delta^{12}$ has
essentially a non-central chi-squared distribution as
in~(\ref{eq:chi2appearance}), MLE is difficult to perform
\mycite{Anderson81}.)  The expression~(\ref{eq:varest}) is the key to the
following test, which can be applied at any fixed~$n$.

\begin{hypothesistest*}
 To test the null hypothesis $H_0$ that $\mathcal{M}^1,\mathcal{M}^2$ are
 equally discrepant from the true model~$\mathcal{M}^*$, against the
 alternative that they are not, calculate what is asymptotically an
 $N(0,1)$ test statistic,
 \begin{displaymath}
   z^{(12)} \defeq
    \Delta^{12} \Bigm/ 
\Bigl\{
-2\tr\bigl[({\bf Q}^{(2)}-{\bf Q}^{(1)})^2\bigr] + 4\,{\bf y}^t\bigl[({\bf Q}^{(2)}-{\bf  Q}^{(1)})^2\bigr]{\bf y}/\sigma_0^2
\Bigr\}^{1/2}.
 \end{displaymath}
If $\left|z^{(12)}\right|>z_0$, where $P(\left|Z\right|>z_0)=\alpha$ for a
standard normal variable~$Z$, the evidence against~$H_0$ is significant at
level~$\alpha$.  Equivalently, the $p$-value associated to~$H_0$ is given
by the formula $p=P(\left|Z\right|>\left|z^{(12)}\right|)$.  To~test
against a one-sided alternative that one model is less divergent than the
other, proceed similarly.
\end{hypothesistest*}

Estimating the variance of the AIC difference by~(\ref{eq:varest}) is what
makes this $z$-test possible.  (The small probability that the estimated
variance may be non-positive should be noted.)  It should be stressed that
the normality of the test statistic is a good approximation only for
large-$n$ models $\mathcal{M}^1,\mathcal{M}^2$ that differ appreciably in
their mis-specification.  In~general one would need to exploit the joint
distribution of the forms ${\bf y}^t{\bf Q}^{(1)}{\bf y}$ and ${\bf
  y}^t{\bf Q}^{(2)}{\bf y}$, which is complicated.

This test of the significance of an AIC difference is modelled after a
$z$-test proposed by \myciteasnoun{Linhart88}.  His test is based on the
large-sample ($N\to\infty$) properties of minimum discrepancy estimators,
and is not restricted to normal regression.  Our test applies when $N=1$,
and is valid in the rather different $n\to\infty$ limit.  However, the need
for an asymptotic mis-specification occurs in his analysis, as in ours.  In
its absence, the test statistic could have a limiting non-central
chi-squared distribution, rather than a normal one
\citep[cf\null.][]{Steiger85}

\smallskip
Throughout this section we have dealt with \emph{linear} regression models.
But it is not difficult to extend the estimation of $\Var\Delta^{12}$, and
hence the proposed hypothesis test, to models $\mathcal{M}^1,\mathcal{M}^2$
that are non-linear.  The following is a sketch.  Suppose that model
$\mathcal{M}^l$, $l=1,2$, is defined by ${\bf y}=\allowbreak{\bf
  y}_0^{(l)}+\nobreak\bepsilon^{(l)}$, where ${\bf y}_0^{(l)} = {\bf
  y}_0^{(l)}(\bbeta^{(l)})$ is a sufficiently smooth function, not
necessarily linear, of the parameter vector
$\bbeta^{(l)}\in\mathbb{R}^{k_l}$.  In the non-linear case the estimation
subspace $L_l\subset\mathbb{R}^n$ is replaced by an estimation submanifold
of dimension~$k_l$, but $\mathcal{M}^l$ can be fitted to any datum ${\bf
  y}\in\mathbb{R}^n$ by non-linear regression \mycite{Bates88}.  There will
be a best-fit choice $\hat\bbeta^{(l)}\in\mathbb{R}^{k_l}$ for the
parameter vector, and a predicted vector $\hat{\bf y}^{(l)} = \hat{\bf
  y}^{(l)}_0 (\hat\bbeta^{(l)})$.  As usual, the residual sum of squares
$\textrm{RSS}^{l}$ equals $\bigl[{\bf y}-\hat{\bf y}^{(l)}\bigr]^t
\bigl[{\bf y}-\hat{\bf y}^{(l)}\bigr]$.

The difference from the linear case is this: $\textrm{RSS}^l$ is no~longer
quadratic in~${\bf y}_0$, as in Eq.~(\ref{eq:aicquadformb}).  But it is
straightforward to derive a power series in~${\bf y}_0$ for
$\textrm{RSS}^l$ from a Taylor expansion of ${\bf y}_0(\bbeta^{(l)})$ about
the point $\bbeta^{(l)}=\hat\bbeta^{(l)}$.  In much the same way, one can
obtain an expansion of $\Var\Delta^{12}$ in powers of~${\bf y}_0$.  From
this one can readily construct an unbiased estimator of $\Var\Delta^{12}$
as a power series in~${\bf y}$, by requiring unbiasedness to each order.
By employing a truncation of this series, which is a generalization of the
quadratic estimator~(\ref{eq:varest}), one can extend the proposed test to
candidate normal regression models that are non-linear.  Thus for
non-linear models as for linear ones, it may be possible to employ a
decision procedure that relies on the variability of the $\Delta^{12}$
statistic, rather than on Akaike weights.

\section{Selection without error bars}
\label{sec:4}

The applicability in model selection of the AIC and AICc statistics will
now be considered, in the case when the linear regression models being
assessed are fitted to a data set without error bars.  This is quite
different from the case when error (i.e.\ residual) variances are known and
have been incorporated in each model.  The calculations below reveal the
need for the AICc correction, but also reveal a serious difficulty when a
candidate model is mis-specified by an unknown amount.  It has long been
known that applying the AIC(c) to a mis-specified model is problematic
\mycite{Sawa78,Reschenhofer99}, but we obtain precise expressions for the
asymptotic ($n\to\infty$) shift in the AICc, coming from the
mis-specification.  Our results are similar to those of
\myciteasnoun{Noda96}, but are more explicit.

As in~\S\,\ref{sec:3}\,$\ref{subsec:3a}$, take each candidate regression
model~$\mathcal{M}^l$ to be normal linear, of the form ${\bf y}={\bf
  X}^{(l)}\bbeta^{(l)} +\nobreak \bepsilon^{(l)}$ where $\bepsilon^{(l)} =
\sigma{\bf z}^{(l)}$, with ${\bf z}^{(l)}$ a column vector of independent
standard normals.  The parameter $\btheta^{(l)}=(\bbeta^{(l)};\sigma^2)$
now includes besides $\bbeta^{(l)}\in\mathbb{R}^{k_l}$ the residual
variance~$\sigma^2$, which must also be fitted.  By MLE, if a single datum
${\bf y}\in\mathbb{R}^n$ is available, then $\hat\sigma^2$~equals
$\textrm{RSS}^l / n$.  That~is, $\hat\sigma^2 =\allowbreak {\bf y}^t{\bf
  Q}^{(l)}{\bf y}/n$, where ${\bf Q}^{(l)}$ projects onto the left null
space of~${\bf X}^{(l)}$ (the error space).  The true model~$\mathcal{M}^*$
is ${\bf y}=\allowbreak{\bf y}_0 +\nobreak \bepsilon_0$ with
$\bepsilon_0=\sigma_0{\bf z}_0$, in which both ${\bf y}_0\in\mathbb{R}^n$
and~$\sigma_0^2$ are unknown.

The deterministic mis-specification of~$\mathcal{M}^l$, if any, is
quantified by the parameter $\lambda^{(l)} = {\bf y}_0^t{\bf Q}^{(l)}{\bf
  y}_0/\sigma_0^2$, which is a measure of the distance in~$\mathbb{R}^n$
between ${\bf y}_0$ and the column space of~${\bf X}^{(l)}$ (the estimation
space).  In many reasonable data gathering and regression procedures,
$n$~can be taken arbitrarily large; so the large-$n$ behavior
of~$\lambda^{(l)}$ merits discussion.

One possibility is that $\lambda^{(l)}/n$ will tend to a limit
as~$n\to\infty$, like $\hat\sigma^2 =\allowbreak {\bf y}^t{\bf Q}^{(l)}{\bf
  y}/n$.  That is, in the limit some fraction of the RSS may be
attributable to fitting errors of non-zero mean, coming from
mis-specification, rather than to the random errors of mean zero and
typical size~$\sigma_0$ that come from stochasticity in the data-generating
process~$\mathcal{M}^*$.  (This possibility was discussed
in~\S\,\ref{sec:3}\,$\ref{subsec:3a}$.)  Another possibility is that
$\lambda^{(l)}$ will grow sublinearly in~$n$ or even tend to a finite
value, for a subtle reason: as $n$~increases, it may be possible to enhance
the regression by improving or expanding the model~$\mathcal{M}^l$, giving
an even better fit to~$\mathcal{M}^*$.  But it must be stressed that in the
present framework, which does not make explicit the possibility of taking
$n$ to infinity or even of varying~$n$, there is no way of distinguishing
the fractional contribution made to~$\textrm{RSS}^l$ by a non-zero
mis-specification~$\lambda^{(l)}$, or of estimating its magnitude.  Of
course, in applications where the components $(y_i)_{i=1}^n$ of the
observed vector~${\bf y}$ are the values of a response variable
corresponding to values $(x_i)_{i=1}^n$ of a explanatory one, one may
sometimes be able to estimate this fraction by examining a residual plot.

As in~\S\,\ref{sec:3}\,$\ref{subsec:3a}$, where there was no need to
estimate~$\sigma^2$, an explicit expression for the AIC-type selection
criterion MSC (see Definition~\ref{def:2}) is readily obtained.  The MSC is
defined so that $\textrm{MSC}/2$ for any candidate
$\mathcal{M}=\mathcal{M}_\btheta$ of the form ${\bf y}={\bf
  X}{\bbeta}+\nobreak\bepsilon$, with $\bbeta\in\mathbb{R}^k$, is an
unbiased estimator of the expected overall discrepancy $E_{\bf
  y}\textrm{OD}({\bf y})$ under~$\mathcal{M}^*$.  This is because according
to the policy of~\S\,\ref{sec:2}, it is the latter that should be used in
model selection.  For the discrepancy~$d_\textrm{KL}$, $\textrm{OD}({\bf
  y})$ equals $d_\textrm{KL}(g; f_{\hat\btheta({\bf y})})$, in which
$\hat\btheta=(\hat\bbeta,\hat\sigma^2)$ is the fitted parameter obtained by
MLE\null.  Here $\hat{\bbeta}({\bf y})=({\bf X}^t{\bf X})^{-1}{\bf
  X}^t\,{\bf y}\in\mathbb{R}^k$ as usual; and now $\hat\sigma^2={\bf
  y}^t{\bf Q}{\bf y}/n$, where ${\bf Q}={\bf I}_n-\nobreak{\bf P}$ and
${\bf P}$~projects onto the column space of~${\bf X}$.  The PDF's
$g,f_\btheta$ of $\mathcal{M}^*,\mathcal{M}_\btheta$ are
\begin{align}
  g({\bf y}) &= (2\pi\sigma_0^2)^{-n/2} \exp \left[-({\bf y}-{\bf y}_0)^t ({\bf y}-{\bf y}_0)/2\sigma_0^2    \right],    \label{eq:gdef2}\\
  f_\btheta({\bf y}) &=  (2\pi\sigma^2)^{-n/2}
\exp \left[-({\bf y}-{\bf X}\bbeta)^t ({\bf y}-{\bf X}\bbeta)/2\sigma^2    \right],\label{eq:fdef2}
\end{align}
and by direct computation,
\begin{equation}
  d_\textrm{KL}(f_\btheta,f_\btheta) = C_n + \frac{n}2\ln\left(\sigma^2\right)
\end{equation}
as in~(\ref{eq:firstself}), where we now write $C_n\defeq
(n/2)\left[1+\ln(2\pi)\right]$.  

What can be calculated from the datum ${\bf y}\in\mathbb{R}^n$ is not
$\textrm{OD}({\bf y})$ but the fitted discrepancy $\textrm{FD}({\bf y})$,
i.e., $d_\textrm{KL}(g_{\bf y};f_{\hat\btheta({\bf y})})$, where $g_{\bf
  y}$~is a 1-point atomic PDF\null.  This is simply the negative
log-likelihood $-\ln{\mathcal{L}}_f(\hat\btheta|{\bf y})$ of the fitted
model~$\mathcal{M}_{\hat\btheta}$.  As
in~\S\,\ref{sec:3}\,$\ref{subsec:3a}$, the MSC is given by
\begin{equation}
\label{eq:newmsc}
  \textrm{MSC}/2 = \textrm{FD}({\bf y}) + B \,\defeq\, \textrm{FD}({\bf y}) + E_{\bf y}\left[\textrm{OD}({\bf y}) -
    \textrm{FD}({\bf y})\right],
\end{equation}
where the `$B$' term performs the unbiasing.  Also much as before
(see~(\ref{eq:FDformula})),
\begin{equation}
\label{eq:FDformula2}
\textrm{FD}({\bf y}) =   
-\ln\mathcal{L}_f(\hat\btheta|{\bf y}) = d_\textrm{KL}(f_{\hat\btheta};f_{\hat\btheta}) -\frac{n}2 + \frac{\textrm{RSS}}{2\,\hat\sigma^2}
\end{equation}
expresses the fitted discrepancy in~terms of the RSS, which is ${\bf
  y}^t{\bf Q}{\bf y}$.  But now, under the true model~$\mathcal{M}^*$ the
fitted variance~$\hat \sigma^2$ as~well as the RSS is a random variable.
Since $\hat\sigma^2$ equals $\textrm{RSS}/n$, (\ref{eq:FDformula2})
simplifies to
\begin{equation}
  \textrm{FD}({\bf y}) = 
-\ln\mathcal{L}_f(\hat\btheta|{\bf y}) =
d_\textrm{KL}(f_{\hat\btheta};f_{\hat\btheta}) = 
C_n + \frac{n}2\ln\left(\hat\sigma^2\right).
\end{equation}
The following is the counterpart of Theorem~\ref{thm:31}.  In the
statement, $\lambda\defeq {\bf y}_0^t{\bf Q}{\bf y}_0/\sigma_0^2$ is the
(presumably unknown) $\mathcal{M}$-specific mis-specification parameter.

\begin{mytheorem}
\label{thm:41}
  Under the model\/ $\mathcal{M}^*$, the overall and fitted discrepancies
  of the fitted model\/ $\mathcal{M}_{\hat\btheta({\bf y})}$, ${\rm OD}={\rm OD}({\bf
    y})$ and ${\rm FD}={\rm FD}({\bf y})$, are distributed according to
  \begin{alignat*}{2}
    {\rm OD}&=d_\textrm{KL}(g;f_{\hat{\btheta}({\bf y})})&&\sim\, C_n + \frac{n}2 \ln \left(\hat\sigma^2\right) + \frac{n}2 \left(\frac{\sigma_0^2}{\hat\sigma^2}-1\right) + \frac{\sigma_0^2}{2\,\hat\sigma^2}\, \chi_k^2(\lambda),\\
    {\rm FD}&=d_\textrm{KL}(g_{\bf y};f_{\hat{\btheta}({\bf y})})&& \sim\,
    C_n + \frac{n}2 \ln \left(\hat\sigma^2\right),
  \end{alignat*}
where\/ $\hat\sigma^2$ equals\/ $(\sigma_0^2/n)$ times a random variable
with distribution\/ $\chi_{n-k}^2(\lambda)$, and 
$\chi_k^2(\lambda)$ signifies a random variable that is independent of\/
$\hat\sigma^2$.
\end{mytheorem}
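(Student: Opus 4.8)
The plan is to treat this as the exact counterpart of Theorem~\ref{thm:31}, with the fixed model variance $\sigma^2$ there replaced everywhere by the fitted variance $\hat\sigma^2 = \textrm{RSS}/n = {\bf y}^t{\bf Q}{\bf y}/n$, which is now itself a random variable under~$\mathcal{M}^*$. The fitted-discrepancy half needs almost nothing new: it was already shown, just above the theorem statement, that $\textrm{FD}({\bf y}) = -\ln\mathcal{L}_f(\hat\btheta|{\bf y}) = C_n + (n/2)\ln(\hat\sigma^2)$, since the negative log-likelihood of a normal linear model evaluated at its own MLE collapses to $d_\textrm{KL}(f_{\hat\btheta};f_{\hat\btheta})$. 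So it only remains to record the law of~$\hat\sigma^2$: writing ${\bf y} = {\bf y}_0 + \sigma_0{\bf z}_0$, one has ${\bf y}^t{\bf Q}{\bf y}/\sigma_0^2 = ({\bf z}_0 + {\bf y}_0/\sigma_0)^t{\bf Q}({\bf z}_0 + {\bf y}_0/\sigma_0)$, which is exactly the quadratic form~(\ref{eq:fd}) and hence has the distribution $\chi_{n-k}^2(\lambda)$ with $\lambda = {\bf y}_0^t{\bf Q}{\bf y}_0/\sigma_0^2$; therefore $\hat\sigma^2 = (\sigma_0^2/n)\,\chi_{n-k}^2(\lambda)$, as claimed.

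For the overall discrepancy I would evaluate $\textrm{OD}({\bf y}) = d_\textrm{KL}(g; f_{\hat\btheta({\bf y})})$ straight from the definition~(\ref{eq:dKL}) together with~(\ref{eq:gdef2}) and~(\ref{eq:fdef2}), being careful about the two roles played by~${\bf y}$: the parameter $\hat\btheta = (\hat\bbeta,\hat\sigma^2)$ is held fixed (as a function of the one observed vector) while the dummy integration variable of $d_\textrm{KL}$ ranges over~$\mathbb{R}^n$ with density~$g$. With $\hat\sigma^2$ frozen in this way the integral is precisely the one computed for OD in the proof of Theorem~\ref{thm:31}, read with the symbol~$\sigma^2$ replaced by~$\hat\sigma^2$, so it yields
\begin{displaymath}
  \textrm{OD} = d_\textrm{KL}(f_{\hat\btheta};f_{\hat\btheta}) + \frac{n}{2}\left(\frac{\sigma_0^2}{\hat\sigma^2} - 1\right) + \frac{\sigma_0^2}{2\,\hat\sigma^2}\,({\bf P}{\bf y} - {\bf y}_0)^t({\bf P}{\bf y} - {\bf y}_0)/\sigma_0^2 .
\end{displaymath}
Here $d_\textrm{KL}(f_{\hat\btheta};f_{\hat\btheta}) = C_n + (n/2)\ln(\hat\sigma^2)$ by~(\ref{eq:firstself}) with $\sigma^2 \mapsto \hat\sigma^2$, and ${\bf X}\hat\bbeta = {\bf P}{\bf y}$, so the last quadratic form equals $({\bf P}{\bf z}_0 - {\bf Q}{\bf y}_0/\sigma_0)^t({\bf P}{\bf z}_0 - {\bf Q}{\bf y}_0/\sigma_0) \sim \chi_k^2(\lambda)$, exactly as in~(\ref{eq:36a}). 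Assembling the pieces gives the stated distributional form of~OD.

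The one genuinely new ingredient is the independence assertion, that the $\chi_k^2(\lambda)$ factor in OD is independent of the $\hat\sigma^2$ that scales the $\chi_{n-k}^2(\lambda)$ factor. Here I would argue that the $\chi_k^2(\lambda)$ term is a function of~${\bf y}$ only through ${\bf P}{\bf z}_0 = {\bf P}({\bf y}-{\bf y}_0)/\sigma_0$, hence only through~${\bf P}{\bf y}$, whereas $\hat\sigma^2 = {\bf y}^t{\bf Q}{\bf y}/n$ is a function of~${\bf y}$ only through~${\bf Q}{\bf y}$. Since ${\bf y}\sim N({\bf y}_0,\sigma_0^2{\bf I}_n)$ is Gaussian, ${\bf P}{\bf y}$ and ${\bf Q}{\bf y}$ are jointly Gaussian with cross-covariance $\sigma_0^2\,{\bf P}{\bf Q} = {\bf 0}$ (the projections being complementary), hence independent; so the two chi-squared quantities are independent as well.

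I expect no serious obstacle. The only points needing care are keeping $\hat\btheta$ fixed while integrating over the dummy variable in the OD computation (so that one really is re-running the Theorem~\ref{thm:31} normal integral), and the orthogonality argument behind the independence of the two chi-squared factors; everything else is a routine Gaussian-integral evaluation inherited verbatim from Theorem~\ref{thm:31}.
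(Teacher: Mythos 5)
Your proposal is correct and follows essentially the same route as the paper: evaluate the Gaussian integral for OD with $\hat\btheta$ held fixed (so it reduces to the Theorem~\ref{thm:31} computation with $\sigma^2$ replaced by $\hat\sigma^2$), identify $\hat\sigma^2/(\sigma_0^2/n)$ and $({\bf P}{\bf y}-{\bf y}_0)^t({\bf P}{\bf y}-{\bf y}_0)/\sigma_0^2$ with the quadratic forms (\ref{eq:fd}) and (\ref{eq:36a}), and conclude independence from ${\bf P}{\bf Q}={\bf 0}$. The only cosmetic difference is that the paper cites the `if' part of the Craig--Sakamoto theorem for the independence step, whereas you give the equivalent elementary argument that ${\bf P}{\bf y}$ and ${\bf Q}{\bf y}$ are jointly Gaussian with zero cross-covariance.
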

\begin{proof}
  That $\hat\sigma^2$ equals $(\sigma_0^2/n)$ times a random variable with
  non-central chi-squared distribution $\chi_{n-k}^2(\lambda)$ follows from
  the representation
  \begin{equation}
    \hat\sigma^2 / (\sigma_0^2/n) = 
    {\bf y}^t{\bf Q}{\bf y}/\sigma_0^2 =({\bf z}_0 + {\bf y}_0/\sigma_0)^t{\bf Q}({\bf z}_0 + {\bf y}_0/\sigma_0).
  \end{equation}
  (Cf.~(\ref{eq:fd}.)  As in the proof of Theorem~\ref{thm:31},
  $\textrm{OD}({\bf y})$ is calculated by using the
  definition~(\ref{eq:dKL}) of~$d_\textrm{KL}$ and the definitions
  (\ref{eq:gdef2}),(\ref{eq:fdef2}) of~$g,f_\btheta$.  The definite
  integral in the definition of~$d_\textrm{KL}$ can be evaluated in closed
  form, and the resulting quadratic form $({\bf P}{\bf y} -\nobreak {\bf
    y}_0)^t({\bf P}{\bf y} -\nobreak {\bf y}_0)/\sigma_0^2$ has
  distribution $\chi_k^2(\lambda)$.  (Cf.~(\ref{eq:36a}).)  That this and
  the quadratic form $\hat\sigma^2= {\bf y}^t{\bf Q}{\bf y}/n$ are
  independent follows from the `if'~part of the Craig--Sakamoto theorem,
  mentioned above, since the projection matrices ${\bf P},{\bf Q}$ are
  complementary: they satisfy ${\bf P}{\bf Q}={\bf 0}$.
\end{proof}

\begin{mytheorem}
\label{thm:42}
The unbiasing term\/ `\,$2B$' in the definition\/ {\rm(\ref{eq:newmsc})} of\/
the\/ {\rm AIC}-type selection criterion\/ {\rm MSC} is expressed in
terms of the moments of non-central chi-squared random variables by
\begin{align*}
  2\,B &= 2\,E_{\bf y}\left[{\rm OD}({\bf y}) - {\rm FD}({\bf y})\right]\\
  &= {n}\,\Bigl\{n\,E\,[(\chi^2_{n-k}(\lambda))^{-1}] - 1 + 
[E\,(\chi^2_k(\lambda))]\,E\,[(\chi^2_{n-k}(\lambda))^{-1}]\Bigr\}\\
  &= n\,\left\{-1 + (n+k+\lambda)\left[
\frac1{n-k-2} - \frac1{(n-k)(n-k-2)}\,\lambda + \dotsb \right]\right\}.
\end{align*}
\end{mytheorem}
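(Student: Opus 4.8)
The plan is to read the result off Theorem~\ref{thm:41}, which already records the distributions of $\textrm{OD}$ and $\textrm{FD}$ under $\mathcal{M}^*$, and then to reduce everything to two elementary moments of non-central chi-squared variables. First I would subtract the two distributional statements of Theorem~\ref{thm:41}: the common terms $C_n + \tfrac{n}{2}\ln(\hat\sigma^2)$ cancel, leaving
\[
  \textrm{OD} - \textrm{FD} \;\sim\; \frac{n}{2}\Bigl(\frac{\sigma_0^2}{\hat\sigma^2} - 1\Bigr) + \frac{\sigma_0^2}{2\,\hat\sigma^2}\,\chi_k^2(\lambda).
\]
By Theorem~\ref{thm:41} one may write $\hat\sigma^2 = (\sigma_0^2/n)\,V$ with $V\sim\chi_{n-k}^2(\lambda)$, so $\sigma_0^2/\hat\sigma^2 = n/V$, and the $\chi_k^2(\lambda)$ above is independent of $V$. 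Multiplying by $2$ and substituting gives
\[
  2(\textrm{OD}-\textrm{FD}) \;\sim\; \frac{n^2}{V} - n + \frac{n}{V}\,\chi_k^2(\lambda),
\]
with $V$ and $\chi_k^2(\lambda)$ independent.

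Taking $E_{\bf y}$ of both sides is then routine. Linearity handles the first two terms, and independence factors the third as $n\,E[(\chi_{n-k}^2(\lambda))^{-1}]\,E[\chi_k^2(\lambda)]$; this produces the middle display of the theorem immediately. Collecting the coefficient of $E[(\chi_{n-k}^2(\lambda))^{-1}]$ and inserting the standard mean $E[\chi_k^2(\lambda)] = k+\lambda$ rewrites it as $2B = n\bigl\{(n+k+\lambda)\,E[(\chi_{n-k}^2(\lambda))^{-1}] - 1\bigr\}$.

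The one genuinely non-mechanical step, which I expect to be the main obstacle, is evaluating the reciprocal moment $E[(\chi_{n-k}^2(\lambda))^{-1}]$ and expanding it as a power series in $\lambda$ so as to recover the last display. The route I would take uses the Poisson mixture representation $\chi_r^2(\lambda)\stackrel{\rm d}{=}\chi_{r+2J}^2$ with $J\sim\textrm{Poisson}(\lambda/2)$, together with the classical central value $E[(\chi_m^2)^{-1}]=1/(m-2)$, reducing the task to computing $E_J\bigl[(n-k-2+2J)^{-1}\bigr]$. Writing $(n-k-2+2J)^{-1}=\int_0^1 t^{\,n-k-3+2J}\,{\rm d}t$ (legitimate as long as $n-k>2$) and interchanging expectation with the integral converts this to $\tfrac12\int_0^1(1-u)^{(n-k)/2-2}\,e^{-\lambda u/2}\,{\rm d}u$; expanding $e^{-\lambda u/2}$ and integrating term by term against the Beta kernel yields the series $\sum_{i\ge0}(-\lambda/2)^i\,\Gamma(\tfrac{n-k}{2}-1)\bigm/\bigl[2\,\Gamma(\tfrac{n-k}{2}+i)\bigr]$, whose first two coefficients are $1/(n-k-2)$ and $-1/\bigl[(n-k)(n-k-2)\bigr]$. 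Substituting this expansion into $2B=n\bigl\{(n+k+\lambda)\,E[(\chi_{n-k}^2(\lambda))^{-1}]-1\bigr\}$ and rearranging reproduces the stated third line; the condition $n-k>2$, needed for the reciprocal moment to be finite, is implicit in the form of the answer.
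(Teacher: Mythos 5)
Your proposal is correct and follows essentially the same route as the paper: subtract the two formulas of Theorem~\ref{thm:41}, use the representation $\hat\sigma^2\sim(\sigma_0^2/n)\,\chi^2_{n-k}(\lambda)$ together with the independence of the $\chi^2_k(\lambda)$ variable, and then insert $E[\chi^2_k(\lambda)]=k+\lambda$ and the negative first moment of $\chi^2_{n-k}(\lambda)$. The only difference is that the paper simply quotes the series for $E[(\chi^2_r(\lambda))^{-1}]$ from Appendix~A, whereas you rederive it via the Poisson-mixture and Beta-integral representation (valid for $n-k>2$); your series $\sum_{i\ge0}(-\lambda/2)^i\,\Gamma(\tfrac{r}{2}-1)\bigl/\bigl[2\,\Gamma(\tfrac{r}{2}+i)\bigr]$ is a correct Kummer-transformed rearrangement of the Appendix~A series and yields the same coefficients $1/(n-k-2)$ and $-1/[(n-k)(n-k-2)]$.
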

\begin{proof}
  This comes from the formulas of Theorem~\ref{thm:41} by exploiting
  $\hat\sigma^2\sim(\sigma_0^2/n)\chi_{n-k}^2(\lambda)$ and independence.
  The first moment $E\,(\chi^2_k(\lambda))$ equals $k+\nobreak\lambda$, and
  the series in~$\lambda$ for the negative first moment
  $E\,[(\chi^2_{r}(\lambda))^{-1}]$ (where $r=n-\nobreak k$), appearing in
  square brackets, is taken from Appendix~A\null.
\end{proof}

The preceding calculation reduces when $\lambda=0$ and $\mathcal{M}$~is not
mis-specified to a derivation of the AICc that has been given by several
authors \mycite{Sugiura78,Hurvich89,Cavanaugh97}.  But it is the
generalization to non-zero~$\lambda$, which is similar to one of
\myciteasnoun{Hurvich91}, which is of~interest.  The chi-squared variables
in the formula for the unbiasing term, which if $\lambda$ were zero would
be central, become non-central with non-centrality parameter~$\lambda$.

As was explained above, in some applications it is reasonable for the
mis-specification~$\lambda$ of a candidate model to be large in the sense
that it grows linearly in~$n$; that~is, if the regression procedure is such
that $n$~can be taken arbitrarily large.  But it is also useful to consider
the case of `medium mis-specification,' when to leading order
$\lambda$~grows proportionately to~$n^{1/2}$, and that of `small
mis-specification,' when $\lambda$~is bounded in~$n$ as~$n\to\infty$.
Hence $\lambda$~will now be allowed to grow according to $\lambda\sim
\lambda_1 n+\lambda_{1/2}n^{1/2} +\nobreak\lambda_0 +\nobreak o(1)$,
$n\to\infty$.

\begin{mytheorem}
\label{thm:43}
Let the additive constant\/ $2C_n$ be dropped from the definition of the\/
{\rm AIC}-type selection criterion\/ {\rm MSC}.  Then if the
mis-specification\/ $\lambda$ of a model\/~$\mathcal{M}$ equals zero, its\/
{\rm MSC} reduces to the standard\/ {\rm AICc} given in\/
{\rm(\ref{eq:AICc}\rm)},
\begin{align*}
{\rm AICc} &= n\ln\left(\hat\sigma^2\right) + \frac{2(k+1)n}{n-k-2}\\
&\sim{\rm AIC} + \frac{2\,(k+1)(k+2)}{n} + O(1/n^2),\qquad n\to\infty,
\end{align*}
where\/ ${\rm AIC}=n\ln\left(\hat\sigma^2\right)+2(k+1)$.  \hfil\break In
the regime of small mis-specification, when\/ $\lambda\sim \lambda_0+o(1)$
as\/ ${n\to\infty}$ with\/ ${\lambda_0>0}$,
\begin{displaymath}
  {\rm MSC}\sim {\rm AICc}  - \frac{\lambda_0(2k+\lambda_0)}n + O(1/n^2), \qquad n\to\infty.
\end{displaymath}
In the regime of medium mis-specification, when\/ $\lambda\sim
\lambda_{1/2}n^{1/2} +o(n^{1/2})$ as\/ ${n\to\infty}$ with\/ ${\lambda_{1/2}>0}$,
\begin{displaymath}
  {\rm MSC}\sim {\rm AIC} - \lambda_{1/2}^2 + o(1/n^{1/2}), \qquad n\to\infty.
\end{displaymath}
In the regime of large mis-specification, when\/ $\lambda\sim \lambda_1 n
+o(n)$ as\/ ${n\to\infty}$ with\/ ${\lambda_1>0}$, the\/ {\rm MSC} equals\/
{\rm AIC} plus a\/ $\lambda_1$-dependent quantity growing with\/~$n$.
\end{mytheorem}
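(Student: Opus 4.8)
The plan is to reduce every assertion to the closed form for the unbiasing term $2B$ already obtained in Theorem~\ref{thm:42}, and then carry out an asymptotic expansion in $1/n$ whose bookkeeping is adapted to the size of $\lambda$ in each regime. Observe first that, since $\textrm{FD}({\bf y})=C_n+\tfrac{n}{2}\ln(\hat\sigma^2)$ and $\textrm{MSC}=2\,\textrm{FD}({\bf y})+2B$, dropping the constant $2C_n$ leaves $\textrm{MSC}=n\ln(\hat\sigma^2)+2B$; and $\textrm{AIC}=n\ln(\hat\sigma^2)+2(k+1)$ and $\textrm{AICc}=n\ln(\hat\sigma^2)+2(k+1)n/(n-k-2)$ involve the \emph{same} estimate $\hat\sigma^2=\textrm{RSS}/n$. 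Hence $\textrm{MSC}-\textrm{AIC}$ and $\textrm{MSC}-\textrm{AICc}$ are deterministic functions of $n,k,\lambda$, and it suffices to compare $2B$ with the relevant penalty term.

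Next I would substitute into Theorem~\ref{thm:42} the series for $E[(\chi^2_{n-k}(\lambda))^{-1}]$ taken from Appendix~A and, keeping the terms through first order in~$\lambda$, collect $2B$ into the form
\[
  2B \;=\; \frac{2(k+1)n}{n-k-2}\;-\;\frac{n\,\lambda(2k+\lambda)}{(n-k)(n-k-2)}\;+\;\dotsb,
\]
the $\lambda$-independent leading term being exactly the AICc penalty; this emerges because $(2k+2+\lambda)(n-k)-(n+k+\lambda)\lambda=(2k+2)(n-k)-\lambda(2k+\lambda)$. The $\lambda=0$ case is then immediate: the correction vanishes, $2B=2(k+1)n/(n-k-2)$, so $\textrm{MSC}=\textrm{AICc}$; and a geometric expansion of $n/(n-k-2)$ yields $\textrm{AICc}\sim\textrm{AIC}+2(k+1)(k+2)/n+O(1/n^2)$.

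For the three mis-specification regimes I would write $\lambda\sim\lambda_1 n+\lambda_{1/2}n^{1/2}+\lambda_0+o(1)$ and read off the behavior of the correction $-n\lambda(2k+\lambda)/[(n-k)(n-k-2)]$. When $\lambda\to\lambda_0$ it is $\sim-\lambda_0(2k+\lambda_0)/n$, giving the small-mis-specification statement with its $O(1/n^2)$ remainder. When $\lambda\sim\lambda_{1/2}n^{1/2}$ the $2k\lambda$ piece is $o(1)$ while $n\lambda^2/[(n-k)(n-k-2)]\to\lambda_{1/2}^2$, and since $\textrm{AICc}-\textrm{AIC}=O(1/n)$ one gets $\textrm{MSC}\sim\textrm{AIC}-\lambda_{1/2}^2+o(1/n^{1/2})$. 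When $\lambda\sim\lambda_1 n$ the same term is $\sim-\lambda_1^2 n$, a $\lambda_1$-dependent quantity that grows linearly in~$n$, which together with the bounded remaining pieces proves the final assertion.

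The main obstacle is bookkeeping discipline: because $\lambda$ may itself grow with $n$, the chi-squared expansion must not be treated as a fixed-order series, and in each regime one has to check which terms of the $\lambda$-expansion (and of the ordinary $1/n$-expansion of the coefficients $1/(n-k-2)$, $1/(n-k)$, etc.) genuinely contribute at the order being claimed. The medium and large regimes are the delicate cases, since there the deterministic-mis-specification contribution and the small-sample (AICc) correction are of comparable size and must be carried together; the $\lambda=0$ and small-$\lambda$ cases, by contrast, are essentially routine.
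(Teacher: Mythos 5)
Your overall route is the same as the paper's: its proof likewise consists of substituting the leading-order behaviour of $\lambda$ into the formula of Theorem~\ref{thm:42} and expanding in $1/n$. But the caution you state at the end---that the expansion of $E[(\chi^2_{n-k}(\lambda))^{-1}]$ cannot be treated as a fixed-order series once $\lambda$ grows with $n$---is precisely the check your argument never performs, and it is not a cosmetic omission. The Appendix-A series can be resummed exactly as
\begin{displaymath}
E\bigl[(\chi^2_{r}(\lambda))^{-1}\bigr]
=\frac{1}{r-2}-\frac{\lambda}{(r-2)r}+\frac{\lambda^{2}}{(r-2)r(r+2)}-\dotsb ,
\qquad r=n-k,
\end{displaymath}
and after multiplication by $n(n+k+\lambda)$ the $m$-th term contributes at order $\lambda^{m}/n^{m-1}$. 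Your display for $2B$ keeps only $m=0,1$ and relegates $m\ge2$ to the ``$\dotsb$''; yet the $m=2$ term contributes $\approx+\lambda^{2}/n$, which is of exactly the same order as the $\lambda^{2}$ part of the correction $-n\lambda(2k+\lambda)/[(n-k)(n-k-2)]$ that you retain. Concretely: in the small regime it contributes $+\lambda_0^{2}/n$, so the claimed $O(1/n^{2})$ remainder is not established by your truncation (indeed, carrying the $m=2$ term cancels the $-\lambda_0^{2}/n$ piece, leaving $-2k\lambda_0/n$ as the genuine order-$1/n$ correction); in the medium regime it contributes $+\lambda_{1/2}^{2}$, the same size as the $-\lambda_{1/2}^{2}$ shift you read off; and in the large regime \emph{every} term of the series is of order $n$ (the paper's own proof explicitly notes this), so your assertion that the ``remaining pieces'' are bounded is false and the final claim cannot be read off from the two displayed terms.

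So the gap is that the two-term truncation of the inverse-moment series must itself be justified at the order being claimed, and it cannot be: in each regime with $\lambda\not\to0$ the discarded terms enter at the stated order and alter the coefficients. A complete argument has to control the whole series (for instance via the resummed alternating series above, or the exact representation $E[X^{-1}]=\tfrac12\int_0^1(1-u)^{(r-4)/2}e^{-\lambda u/2}\,{\rm d}u$) and then redo the asymptotics with all contributing terms carried; doing so shows, e.g., that for $\lambda\asymp n$ one has $(n+k+\lambda)\,E[X^{-1}]=1+O(1/n)$ and hence $2B$ remains bounded, so the omitted terms genuinely change the answer rather than merely the error estimate. Only your $\lambda=0$ case---where $2B$ collapses exactly to $2(k+1)n/(n-k-2)$ and the AICc expansion follows from a geometric expansion of $n/(n-k-2)$---is complete as written, and there your computation coincides with the paper's.
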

\begin{proof}
  Substitute the leading-order behavior of~$\lambda$ into the formula given
  in Theorem~\ref{thm:42}.  It should be noted that if $\lambda\sim
  \lambda_1 n$, all terms of the power series in~$\lambda$ for
  $E\left[(\chi_{n-k}^2(\lambda))^{-1}\right]$ will contribute, as each
  will be of order~$1/n$.
\end{proof}

This theorem has disconcerting implications for the usefulness of the AIC
and AICc in model selection.  It reveals how different the case of an
unknown error variance $\sigma^2$~is, from the case of a known~$\sigma^2$
(treated in~\S\,\ref{sec:3}\,$\ref{subsec:3a}$).

If $\lambda=0$ and the model is not mis-specified, the theorem confirms
that including the standard AICc correction term of magnitude $O(1/n)$ is
justified.  This term may affect the selection procedure if $n$~is small.
But if $\lambda=\lambda_0+\nobreak O(1/n)$ as~$n\to\infty$ with $\lambda_0$
non-zero, the coefficient of~$1/n$ in the correction term will deviate from
the AICc form.  Since $\lambda_0$~is typically not known, this renders
difficult any small-$n$ correcting of the AIC\null.  In the regime of
medium mis-specification the problem is worse: the $O(1)$ unbiasing term
$2(k+\nobreak1)$ in the AIC itself is shifted by an amount depending
on~$\lambda_{1/2}$.  And in the regime of large mis-specification, in which
applications of model selection may well lie, the AIC is shifted by a
potentially large amount, growing with~$n$.  This shift may swamp the term
$2(k+1)$.

Theorem~\ref{thm:43} indicates that when deciding between candidate models
that have been fitted to a data set without error bars, it may be unwise to
use the AICc or even the AIC, if there is any possibility that the models
are mis-specified relative to the true data-generating
process~$\mathcal{M}^*$, and if the amount of mis-specification is unknown
but is expected to be substantial.  Since in the physical sciences
$\mathcal{M}^*$~is typically infinite-dimensional, candidate models that
are mis-specified, at~least to some extent, are expected to occur quite
widely.

\section{Summary and discussion}
\label{sec:5}

In this paper the AIC and AICc were developed from first principles, to
clarify their ability to assess competing regression models of the sort
common in the physical sciences: ones with normal errors and known error
variances, coming from the error bars of a size-$n$ data set.  The data set
was viewed as providing a single observation ($N=1$) of an $S$-valued
random quantity, the space~$S$ being~$\mathbb{R}^n$.

In~\S\,\ref{sec:2} a model selection policy was formulated, applying to
arbitrary $S$ and arbitrary~$N$.  The Kullback--Leibler
divergence~$d_\textrm{KL}$ was then chosen as the discrepancy functional,
which the policy left unspecified.  The choice of~$d_\textrm{KL}$ ensures
that fitted models are compared on the basis of their fitted
log-likelihoods, suitably unbiased (i.e., penalized).  It was noted that
other measures of the discrepancy between a parametric model and a data set
could be used, such as the popular Hellinger distance.  This option is
worth exploring, since MLE is not robust and may not be the best choice if,
say, the regression is non-linear or non-normal errors are present.  But as
was explained, this will require that the selection policy be modified to
include a form of kernel density estimation.

It was shown in~\S\,\ref{sec:3}\,$\ref{subsec:3a}$ that when fitting a
linear regression model to data with error bars, the AIC and not the AICc
should be used.  (For comments on the recent astrophysics literature, see
Appendix~B\null.)  If the model incorporates a known error
variance~$\sigma^2$, its mis-specification if~any does not affect the
validity of the AIC, though it causes certain discrepancy statistics to
have non-central rather than central chi-squared distributions.  That no
additional unbiasing of the AIC is needed when $\sigma^2$~is known has
in~fact been noticed \mycite[p.~209]{Kuha2004}, but seems to have attracted
little attention.  In applications of the AIC in the physical sciences, it
is of considerable importance.

In~\S\,\ref{sec:3}\,$\ref{subsec:3b}$, it was shown that in the same
setting as that of~\S\,\ref{sec:3}\,$\ref{subsec:3a}$, the variability of
an AIC difference can be estimated.  A~test of significance was proposed,
which exploits what under reasonable conditions of mis-specification is the
asymptotic ($n\to\infty$) normality of this statistic.  The significance
test is a \emph{test for selection}, which can potentially replace the use
of Akaike weights in deciding between regression models with known error
variances.

The approach of~\S\,\ref{sec:3}\,$\ref{subsec:3b}$ to model selection
resembles the approach of \myciteasnoun{Commenges2008}.  In a general
large-sample ($N\to\infty$) context, not focused on the comparison of
regression models, they proposed a test for selection based on a difference
of two AIC's.  They were able to work~out the asymptotic distribution of a
normalized version of this difference by exploiting large-sample theory for
the likelihood ratio statistic.  This included the classical result of
\myciteasnoun{Wald43} on the comparison of nested models, involving a
non-central chi-squared distribution, and an asymptotic normality result of
\myciteasnoun{Vuong89}, who dealt with non-nested models.  The test
proposed in~\S\,\ref{sec:3}\,$\ref{subsec:3b}$ is similar in spirit, but in
the formulation used here the $n\to\infty$ limit of a regression model is
rather different from an $N\to\infty$ large-sample limit, and requires its
own analysis.

In~\S\,\ref{sec:4} the usual derivation of the AICc statistic, applying to
linear regression models fitted to data sets without error bars, was
extended to models with non-zero mis-specification~$\lambda$.  The
appearance of a non-central chi-squared in the distribution of the overall
Kullback--Leibler discrepancy is not unexpected.  But the behavior of the
unbiasing term in the large-$n$ limit is cause for concern.  The case when
the model mis-specification~$\lambda$ is $o(1)$ as~$n\to\infty$ is the
nicest.  (It has a close analogue in the large-sample theory of the
likelihood ratio: the case of `local alternatives,' when the true value of
a model parameter is taken to approach the pseudo-true value
as~$N\to\infty$.)  Except in this case, the shift in the AICc due to the
mis-specification may swamp in the large-$n$ limit the AICc correction
term, and even the usual $2(k+1)$ unbiasing term.  For mis-specified
regression models fitted to data without error bars, this may well affect
the usefulness of the AICc as a tool in model selection.  The extent to
which this problem occurs in the physical sciences remains to be studied.

\appendix

\section*{Appendix~A\null.  Non-central chi-squared distributions and quadratic forms}

A (central) $\chi^2$ distribution with $r$~degrees of freedom,
denoted~$\chi_r^2$, is the distribution of the sum of the squares of
$r$~independent standard normal random variables.  That is, if ${\bf z}$~is
a column vector of $r$~standard normals, then ${\bf z}^t{\bf
  z}\sim\chi_r^2$.  There is a generalization: if ${\bf P}$ is an $n\times
n$ projection matrix of rank~$s$ with $0\le s\le r$, the quadratic form
$({\bf P}{\bf z})^t({\bf P}{\bf z}) = {\bf z}^t{\bf P}{\bf z}$ has
distribution~$\chi_s^2$.

A $\chi^2$ distribution with $r$~degrees of freedom and non-centrality
parameter~$\lambda$, denoted $\chi_r^2(\lambda)$, is the distribution of
$({\bf z}+{\bf u})^t({\bf z}+{\bf u})$, where ${\bf u}$~is a fixed column
vector.  That is, it is the distribution of the squares of $r$~independent
unit-variance normal variables, not necessarily of mean zero.  The
parameter~$\lambda$ equals~${\bf u}^t{\bf u}$.  There is a generalization:
the quadratic form $[{\bf P}{\bf z}+\nobreak{\bf u}]^t[{\bf P}{\bf
    z}+\nobreak{\bf u}] $ has distribution $\chi_s^2(\lambda)$ with
$\lambda={\bf u}^t{\bf u}$.  A second generalization is that $[{\bf P}({\bf
    z}+\nobreak{\bf u})]^t[{\bf P}({\bf z}+\nobreak{\bf u})] =\allowbreak
({\bf z}+\nobreak{\bf u})^t{\bf P}({\bf z}+\nobreak{\bf u})$ has
distribution $\chi^2_s(\lambda)$ with $\lambda={\bf u}^t{\bf P}{\bf u}$.

When $r>1$, the PDF of $\chi_r^2(\lambda)$ cannot be expressed in terms of
elementary functions, though it can in~terms of the confluent
hypergeometric function~${}_0F_1$, or alternatively a modified Bessel
function of the first kind.  If $X\sim \chi_r^2(\lambda)$ then $X$~has mean
and variance
\begin{equation}
  E\,X = r+\lambda, \qquad \Var X = 2\,r + 4\,\lambda,
\end{equation}
and negative first moment
\begin{equation}
  E\,[X^{-1}] = e^{-\lambda/2}\sum\nolimits_{m=0}^\infty \frac{(\lambda/2)^m}{m!} \,
\frac1{r-2+2\,m}.
\end{equation}
For details, see \myciteasnoun{Mathai92} and \myciteasnoun{Bock84}.

\section*{Appendix~B\null.  The AICc in recent papers}

The AIC and AICc have recently entered the physical sciences and in
particular astrophysics by being used to compare cosmological models.  Such
models have a relatively small number of parameters, and competing models
are usually not nested.  Models have been compared, e.g., on the basis of
their predictions of the distance--redshift relation, which characterizes
the expansion of the Universe.  After a model is fitted by non-linear
regression to observational data, its goodness of fit is assessed by
calculating its AIC or~AICc.  One recent comparison of models, employing
the AIC and the Bayesian criterion BIC, is that of \myciteasnoun{Shi2012}.

A search reveals that many though not all publications in this area use the
AIC and AICc in a fashion that on the basis of the present work, can be
considered correct.  If the observational data are accompanied by error
bars, or a common error variance~$\sigma^2$ is known or assumed, the AIC
should be used; and if the variance is treated as a nuisance parameter to
be fitted, the AICc should be used.  \myciteasnoun{Davis2007},
\myciteasnoun{Li2010} and \myciteasnoun{Tan2012} employ the AICc, despite
their data sets being accompanied by error bars, which strictly speaking is
incorrect; but they observe in their analyses that the AICc correction is
of negligible size and does not affect model comparisons.  The paper of
\citeauthor{Tan2012} is especially valuable from a statistician's point of
view, because they investigate AIC(c) variability empirically rather than
theoretically, using a bootstrap procedure.

Unfortunately, a number of papers in the literature are based on data with
explicit error bars, but use the AICc without commenting on whether its
$O(1/n)$ correction term affects their results.  This includes the papers
of \myciteasnoun{Biesiada2009}, \myciteasnoun{February2010},
\myciteasnoun{Kelly2010}, \myciteasnoun{Dantas2011},
\myciteasnoun{Basilakos2012}, \myciteasnoun{Papageorgiou2012} and
\myciteasnoun{Wang2012}.  A re-examination of the model comparisons in
these papers is surely desirable.


\setlength\bibsep{0pt}
\makeatletter
   \@setfontsize\small\@ixpt{10.75}
\makeatother


\end{document}